\newtheoremstyle{mytheorem}{3pt}{3pt}{\slshape}{}{\bfseries}{}{.5em}{}
\theoremstyle{mytheorem}
\newtheorem{lemma}{Lemma}
\newtheorem{theorem}{Theorem}
\newtheorem{observation}{Observation}
\theoremstyle{definition}
\def\denseitems{
    \itemsep1pt plus1pt minus1pt
    \parsep0pt plus0pt
    \parskip0pt\topsep0pt}
\newbox\ProofSym \setbox\ProofSym=\hbox{%
  \unitlength=0.18ex%
  \begin{picture}(10,10) \put(0,0){\framebox(9,9){}}
    \put(0,3){\framebox(6,6){}}
  \end{picture}}
\title{Maximum-Width Empty Square and Rectangular Annulus\thanks{%
S. W. Bae was supported by Basic Science Research Program through the National
Research Foundation of Korea (NRF) funded
by the Ministry of Education (2018R1D1A1B07042755).
P. R. S. Mahapatra was supported by Research Project through Department of Atomic Energy (NBHM),
Goverment of India with Ref. No. 2/48(19)/2014/R\&D-II/1045. 
}
}
\author{%
Sang Won Bae\thanks{%
Division of Computer Science and Engineering,
Kyonggi University, Suwon, Republic of Korea.
Email: \texttt{swbae@kgu.ac.kr}.
}
\and
Arpita Baral\thanks{%
Department of Computer Science and Engineering,
University of Kalyani, India.
Emails: \texttt{arpitabaral@gmail.com}, \texttt{priya@klyuniv.ac.in}.
}
\and
Priya Ranjan Sinha Mahapatra\footnotemark[3]
}
\begin{document}
\maketitle

\begin{abstract}
An annulus is, informally, a ring-shaped region, often described by
two concentric circles.
The maximum-width empty annulus problem asks to find
an annulus of a certain shape with the maximum possible width
that avoids a given set of $n$ points in the plane.
This problem can also be interpreted as the problem of finding an optimal
location of a ring-shaped obnoxious facility among the input points.
In this paper, we study square and rectangular variants
of the maximum-width empty anuulus problem,
and present first nontrivial algorithms.
Specifically, our algorithms run in $O(n^3)$ and $O(n^2 \log n)$ time
for computing a maximum-width empty axis-parallel square and rectangular annulus,
respectively.
Both algorithms use only $O(n)$ space.
\end{abstract}

\section{Introduction}

The problem of computing a minimum-size geometric object
that encloses an input point set $P$ is
one of the central research problems in computational geometry.
This type of problem has been extensively studied
with direct applications to location of desirable facilities to customers $P$,
for a variety of different geometric shapes
including circles~\cite{ps-cgi-90}, rectangles~\cite{tou-sgprc-83},
and annuli~\cite{ast-apsgo-94,as-eago-98,ght-oafepranw-09,jmkd-mwra-2012,AHIMPR-03,bae-cmwea-2018}.


On the other hand, in some applications, the facility to be built among $P$
is considered \emph{obnoxious}, that is, every member in $P$ wants to
be as far away from it as possible.
The problem of locating an obnoxious facility is often interpreted as
the problem of finding a maximum-size empty geometric object among $P$.
For examples, the center of a largest circle or square that is empty of $P$
corresponds to an optimal location of a point obnoxious facility
that maximizes the Euclidean or $L_\infty$ distance, respectively,
from its closest point in $P$.
A largest empty circle or square can be found in optimal $O(n \log n)$ time
using the Voronoi diagram~\cite{t-cleclc-83},
and the best known algorithm that computes an empty axis-parallel rectangle of maximum area
runs in $O(n \log^2 n)$ time by Aggarwal and Suri~\cite{as-facler-87}.
The widest empty corridor problem,
in which one wants to find a widest empty strip among $P$ of arbitrary orientation,
is another interesting problem in this concept.
After Houle and Maciel~\cite{hm-fwecsp-88} presented an $O(n^2)$-time algorithm
for this problem, a lot of variants and extensions have been addressed,
including 
the widest L-shaped corridor problem~\cite{c-welsc-96},
and the widest $1$-corner corridor problem~\cite{dls-fwe1cc-06}.
Note that these problems are equivalent to those of finding an optimal location
of an obnoxious facility whose shape is of a line, a line segment, or a polygonal chain.

In this paper, along this line of research, we study
the \emph{maximum-width empty annulus problem}.
Informally, an annulus is a ring-shaped region, often described by
two concentric circles.
Thus, the maximum-width empty annulus problem is to find
an optimal location of a ring-shaped obnoxious facility among the input points $P$.
Specifically, we discuss its square and rectangular variants,
and present first nontrivial algorithms.
Our algorithms run in $O(n^3)$ and $O(n^2 \log n)$ time
for computing a maximum-width axis-parallel square and rectangular annulus, respectively,
that is empty of a given set $P$ of $n$ points in the plane.
Both algorithms use only $O(n)$ space.

There has been a little work on the maximum-width empty annulus problem.
D{\'{\i}}az{-}B{\'{a}}{\~{n}}ez et al.~\cite{dhmrs-leap-03} first studied the problem
for circular annulus, and proposed an $O(n^3\log n)$-time and $O(n)$-space
algorithm to solve it.
To our best knowledge, there was no known correct algorithm in the literature
for the maximum-width empty square or rectangular annulus problem.
Mahapatra~\cite{mahapatra-larea-2012} considered the maximum-width empty
rectangular annulus problem and claimed an incorrect $O(n^2)$-time algorithm.
There is a missing argument in Observation~2 of \cite{mahapatra-larea-2012},
which incorrectly claimed that the total number of potential outer rectangles
forming an empty rectangular annulus is $n-1$.

Unlike the maximum-width empty annulus problem,
the problem of finding a minimum-width annulus that encloses $P$ has recently
attained intensive interests from researchers.
As a classical one, circular annuli have been studied earlier
with applications to the roundness problem~\cite{w-nmpp-86,rz-epccmrsare-92,efnn-rauvd-89}, and the currently best known algorithm runs
in $O(n^{3/2+\epsilon})$ time~\cite{ast-apsgo-94,as-eago-98}.
Computing a minimum-width axis-parallel square or rectangular annulus that encloses $n$ points $P$
can be done in $O(n \log n)$ or $O(n)$ time, respectively~\cite{ght-oafepranw-09,AHIMPR-03}.
Mukherjee et al.~\cite{jmkd-mwra-2012} considered the problem of
identifying a rectangular annulus of minimum width that encloses $P$ in arbitrary orientation,
and presented an $O(n^2 \log n)$-time algorithm.
Bae~\cite{bae-cmwea-2018} studied a minimum-width square annulus in arbitrary orientation
and showed that it can be solved in $O(n^3 \log n)$ time.

The rest of the paper is organized as follows:
In Section~\ref{sec:problemdef}, we introduce some definitions and notations,
and precisely define our problems.
Our algorithms are described in the following sections:
Section~\ref{sec:sq} for computing a maximum-width empty axis-parallel square annulus
and Section~\ref{sec:rect} for computing a maximum-width empty axis-parallel rectangular annulus.
Finally, we finish the paper with concluding remarks in Section~\ref{sec:conclusions}.

\section{Problem Definition and Terminologies}\label{sec:problemdef}

Throughout the paper, we consider a Cartesian coordinate system
of the plane $\mathbb{R}^2$ with the $x$- and $y$-axes.
For any point $p$ in the plane $\mathbb{R}^2$,
we denote by $x(p)$ and $y(p)$ its $x$- and $y$-coordinates.
For an axis-parallel rectangle or square, its four sides are naturally identified by \emph{top}, \emph{bottom}, \emph{left}, and \emph{right} sides, respectively.

For an axis-parallel square, the intersection point of its two diagonals
is called its \emph{center}, and its \emph{radius} is half its side length.
An \emph{axis-parallel square annulus} is the region between two concentric
axis-parallel squares $S$ and $S'$, where $S' \subseteq S$.
We call $S$ and $S'$ the \emph{outer} and \emph{inner} squares, respectively,
of the annulus.
The \emph{width} of a square annulus is defined to be
the difference of radii of its outer and inner squares.
See \figurename~\ref{fig:annulus}(left) for an illustration.

 \begin{figure}[t]
 \centering
 \includegraphics[width=.7\textwidth]{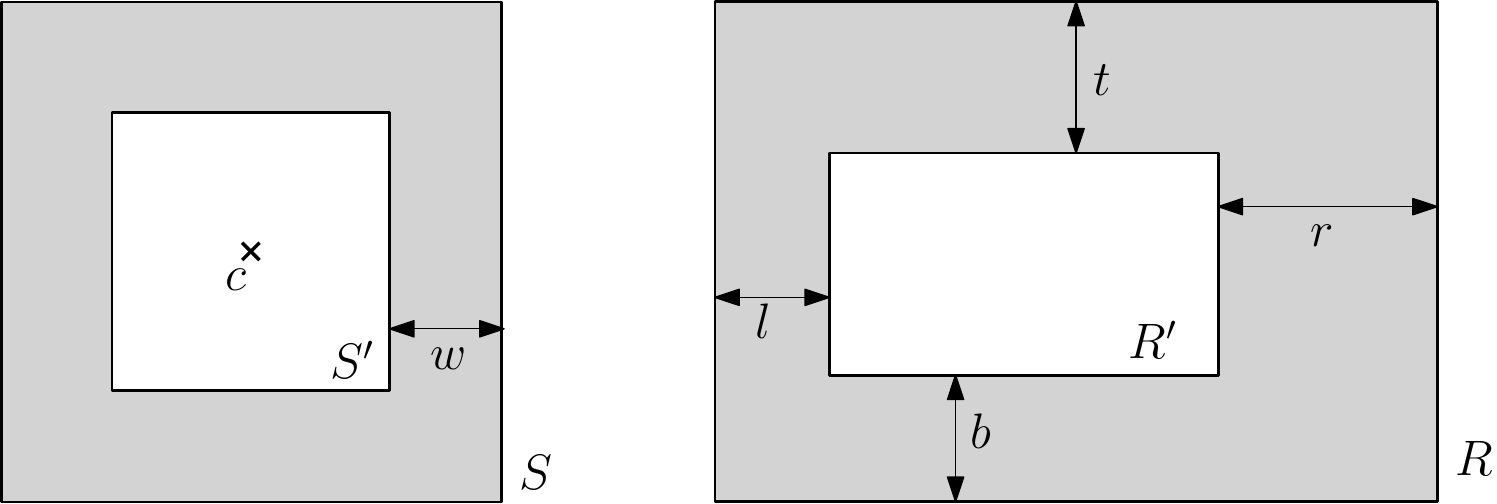}
 \caption{A square annulus of width $w$
 with outer and inner squares $S$ and $S'$ having a common center $c$ (left)
 and a rectangular annulus with outer and inner rectangles $R$ and $R'$
 whose top-, bottom-, left-, right-widths are $t$, $b$, $l$, and $r$,
 respectively (right).}
 \label{fig:annulus}
 \end{figure}

An \emph{axis-parallel rectangular annulus} is the region obtained by subtracting the interior of an axis-parallel rectangle $R'$ from another axis-parallel rectangle $R$ such that $R' \subseteq R$.
We call $R$ and $R'$ the \emph{outer rectangle} and \emph{inner rectangle} of
the annulus, respectively.
Consider a rectangular annulus $A$ defined by its outer and inner rectangles, $R$ and $R'$.
By our definition, note that $R$ and $R'$ defining annulus $A$ do not have to be concentric, so that $A$ may not be a symmetric shape.
The \emph{top-width} of $A$ is the vertical distance between the top sides of $R$ and $R'$, and the \emph{bottom-width} of $A$ is the vertical distance between their bottom sides.
Analogously, the \emph{left-width} and \emph{right-width} of $A$ are defined to be the horizontal distance between the left sides of $R$ and $R'$ and the right sides of $R$ and $R'$, respectively.
Then, the \emph{width} of $A$ is defined to be the minimum of the four values: the top-width, bottom-width, left-width, and right-width of $A$.
See \figurename~\ref{fig:annulus}(right) for an illustration.

In this paper, we only discuss squares, rectangles, square annuli,
and rectangular annuli that are axis-parallel.
Hence, we shall drop the term ``axis-parallel'', and
any square, rectangle, or annulus we discuss is assumed to be axis-parallel. 

Let $P$ be a set of $n$ points in $\mathbb R^{2}$.
A square or rectangular annulus $A$ is said to be \emph{empty} of $P$,
or just \emph{empty} when there is no confusion,
if the interior of $A$ does not contain any point in $P$.
Consider any empty square or rectangular annulus $A$.
Then, $A$ induces a partition of $P$ into
two subsets $P_\mathrm{out}$ and $P_\mathrm{in}$
such that $P_\mathrm{in}$ is the set of points in $P$ lying in the interior
or on the boundary of the inner square or rectangle of $A$, and
$P_\mathrm{out} = P \setminus P_\mathrm{in}$.
If both $P_\mathrm{out}$ and $P_\mathrm{in}$ are nonempty,
then we say that $A$ is \emph{valid}.
In this paper, we address the following problems:

\begin{center}
\noindent\framebox{\begin{minipage}{4.50in}	
 {\textsc{MaxWidthEmptySquareAnnulus}} (MaxESA)\\
\emph{Input}: A set of points $P$ in $\mathbb R^{2}$ \\
\emph{Output}: A valid empty square annulus $A$ of maximum width
\end{minipage}}
\end{center}

\begin{center}
\noindent\framebox{\begin{minipage}{4.50in}	
 {\textsc{MaxWidthEmptyRectangularAnnulus}} (MaxERA)\\
\emph{Input}: A set of points $P$ in $\mathbb R^{2}$ \\
\emph{Output}: A valid empty rectangular annulus $A$ of maximum width
\end{minipage}}
\end{center}

The constraint that the resulting empty annulus should be valid is essential
to make the problem nontrivial;
the same constraint has often been considered in the problem of
computing empty objects of maximum size~\cite{dhmrs-leap-03,c-welsc-96,jp-wcp-94,dls-fwe1cc-06}.
Throughout the paper, we are interested only in valid empty annuli,
so we shall drop the term ``valid'' unless stated otherwise.


\section{Maximum-Width Empty Square Annulus} \label{sec:sq}

In this section, we present an algorithm that computes
a maximum-width valid empty square annulus for a given set $P$ of $n$ points.

Consider any empty square annulus $A$.
Keeping the same partition of $P$ by $A$, one can enlarge the outer square
and shrink the inner square so that some points of $P$ lie on
the boundary of the outer and inner squares.
This process implies the following observation.
A side of a rectangle or a square is said to be \emph{at infinity}
if it is a translated copy of a line segment by a translation vector at infinity.
\begin{observation} \label{obs:sq_conf}
 There exists a maximum-width empty square annulus such that
 one side of its inner square contains a point of $P$ and
 one of the following holds:
 (i) There are a pair of opposite sides of its outer square,
  each of which contains a point of $P$,
 (ii) there are two adjacent sides of its outer square,
  each of which contains a point of $P$,
   and the other two sides are at infinity, or
 (iii) One side of its outer square contains a point of $P$ and
  the other three sides are at infinity.
\end{observation}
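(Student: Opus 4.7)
The plan is to begin with an arbitrary maximum-width valid empty square annulus $A$ --- with center $c$, outer radius $R$, inner radius $r$, and induced partition $(P_{\rm in}, P_{\rm out})$, both nonempty by validity --- and to transform $A$ by width-nondecreasing moves until it realizes the claimed structure.

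The first two reductions are standard tightening arguments. \emph{Shrinking the inner.} If no point of $P$ lies on the boundary of the inner square, decrease $r$ slightly while keeping $c$ and $R$ fixed; this strictly increases the width, contradicting maximality. Hence some point of $P$ lies on a side of the inner square. \emph{Growing the outer.} If no point of $P$ lies on the boundary of the outer square, increase $R$ slightly; this is bounded above because $P_{\rm out}$ is nonempty, and at the stopping point some $P$-point lies on a side of the outer square.

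Next, I would hold the partition $(P_{\rm in}, P_{\rm out})$ fixed and view the center $c$ as a free parameter, defining $r(c) := \max_{q \in P_{\rm in}} \|q-c\|_\infty$, $R(c) := \min_{p \in P_{\rm out}} \|p-c\|_\infty$, and $w(c) := R(c)-r(c)$. Within any region of $c$-space where the tight $P_{\rm in}$- and $P_{\rm out}$-points and their contact sides do not change, $w$ is an affine function of $c$. Starting from the current maximizer, I would slide $c$ along a direction in which $w$ remains constant; the motion continues until an additional $P$-point becomes tight (adding a new contact) or until the sliding is unbounded. An unbounded sliding direction corresponds precisely to a side of the outer (and of the matching inner) square receding to infinity. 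Iterating this process, we end in one of three situations: (a) we collect tight contacts on two opposite sides of the outer square, giving case~(i); (b) the bounded portion of the motion yields contacts on two adjacent outer sides but the remaining direction still permits unbounded sliding, producing case~(ii) with the other two outer sides at infinity; or (c) only one outer-side contact is present and two linearly independent unbounded sliding directions remain, producing case~(iii) with three outer sides at infinity.

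The main technical obstacle is the unbounded-sliding step: one must verify that $w(c) \equiv w^*$ persists all along the motion toward infinity. This reduces to checking that the identities of the tight $P_{\rm in}$- and $P_{\rm out}$-points and their contact sides are preserved under the motion, which in turn is a finite case analysis over the relative orientations of the active contact sides and the sliding direction. A final consistency check confirms that the limiting degenerate configuration (with one or more outer sides at infinity) is a legitimate empty square annulus in the sense of the paper and realizes case~(ii) or case~(iii).
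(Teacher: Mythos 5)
Your proposal is correct and follows essentially the same route as the paper: tighten the inner and outer squares against points of $P$, then slide the center along width-preserving directions until additional outer-side contacts appear or sides recede to infinity, yielding cases (i)--(iii). The ``technical obstacle'' you defer is resolved in the paper simply by choosing the sliding directions explicitly (perpendicularly away from a single contact side, then diagonally away from the shared corner of two adjacent contact sides), under which the outer and inner squares only grow and emptiness can only be violated at the moment a new point meets the outer boundary.
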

\begin{proof}
Consider any maximum-width empty square annulus $A$ with
outer square $S$, inner square $S'$, and center $c$.
If there is no point on the boundary of the inner square $S'$,
then we can shrink it with the same center, since $A$ is valid,
so that we can strictly increase its width, a contradiction.
So, there must be at least one point lying on the boundary of $S'$.
Similarly, if $S$ contains no point of $P$ on its boundary,
then we can enlarge it with the same center, keeping $S'$ the same,
so that we can strictly increase its width, a contradiction.
So, there must be at least one point $p \in P$ lying on the boundary of $S$.

Without loss of generality, we assume that $p$ lies on the top side
of $S$.
If the other three sides of $S$, except the top side, contain no point of $P$,
then we move the center $c$ in the direction away from $p$,
keeping $p$ on the top side of $S$ and the width of $A$ unchanged,
until one of the other three sides of $S$ hits a point of $P$.
If this enlarging process does not stop, then we end with
the three sides of $S$ being at infinity,
resulting in case (iii).
Otherwise, we now have another side of $S$ containing a point $p' \in P$.
There are two cases: either it is the bottom side of $S$,
which is now in case (i),
or is the left or right side of $S$.

In the latter case, we have two adjacent sides of $S$ each of which contains
a point of $P$.
Without loss of generality, we assume the second side is the left side.
Now, we again move the center in the direction away
from the top-left corner of $S$,
keeping $p$ on the top side of $S$, $p'$ on the left side of $S$,
and the width of $A$ unchanged,
until one of the bottom and right sides hits a point of $P$.
If this enlarging process does not stop, then we end with
the two sides of $S$ being at infinity,
resulting in case (ii).
Otherwise, we now have the third side of $S$ containing a point of $P$,
and this falls in case (i).
\end{proof}

By Observation~\ref{obs:sq_conf}, we now have three different configurations
of empty square annuli to search for.
If this is case (iii), then observe that it corresponds to
a maximum-width empty horizontal or vertical strip,
which also can be reduced to the problem of finding the maximum gap
in $\{x(p) \mid p\in P\}$ or in $\{y(p) \mid p\in P\}$.
Hence, case (iii) can be handled in $O(n)$ time after sorting $P$.

On the other hand, if this is case (ii), then
the resulting square annulus corresponds to
a maximum-width empty ``axis-parallel'' L-shaped corridor.
It is known that a maximum-width empty L-shaped corridor over all orientations
can be computed in $O(n^3)$ time with $O(n^3)$ space by Cheng~\cite{c-welsc-96},
while we are seeking only for axis-parallel ones.
Here, we give a simple $O(n^2 \log n)$ time algorithm for this problem.
\begin{theorem} \label{thm:L-corridor}
 Given $n$ points in the plane,
 one can compute a widest empty axis-parallel corridor in $O(n^2 \log n)$ time
 using $O(n)$ space.
\end{theorem}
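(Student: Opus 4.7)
The plan is to enumerate pairs $(p,q)$ of points that jointly pin the outer boundary of the L-corridor and, for each such pair, to compute the maximum achievable width in $O(1)$ amortized time after an $O(n\log n)$ preprocessing per choice of $p$. First I would establish a structural lemma analogous to Observation~\ref{obs:sq_conf}. Without loss of generality, fix one of four orientations of the L and assume its two finite outer sides are its bottom ($y=y_B$) and right ($x=x_R$), with the horizontal arm extending to the left and the vertical arm extending upward to infinity, and the reflex corner at $(x_L,y_T)$. The arm widths are then $w_h = y_T - y_B$ and $w_v = x_R - x_L$, and the corridor width is $w = \min(w_h, w_v)$. At any maximum-width empty configuration, some $p \in P$ pins the bottom (with $y(p)=y_B$ and $x(p)\le x_R$), some $q \in P$ pins the right side (with $x(q)=x_R$ and $y(q)\ge y_B$), and given the outer corner, $x_L$ equals the maximum $x$-coordinate and $y_T$ the minimum $y$-coordinate among the \emph{problematic} points, namely the points strictly inside the open upper-left quadrant of $(x_R, y_B)$. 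Each failure of these properties admits a local slide that strictly enlarges $w$ while preserving emptiness, in parallel to the proof of Observation~\ref{obs:sq_conf}.

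Given this characterization, the algorithm iterates over each $p\in P$ as the bottom-pinner. For each $p$, set $y_B = y(p)$, let $P_p = \{r\in P : y(r) > y(p)\}$, and sort $P_p$ by $x$-coordinate in decreasing order as $r_1, r_2, \ldots, r_k$. Precompute the suffix minimum $m_i = \min_{j \geq i} y(r_j)$ in linear time. Then iterate $i = 1, \ldots, k-1$ with $q = r_i$ as the right-pinner, restricting to pairs satisfying $x(p)\le x(r_i)$; the problematic set is then exactly $\{r_{i+1}, \ldots, r_k\}$, so $x_L = x(r_{i+1})$ and $y_T = m_{i+1}$, yielding the candidate width $\min(m_{i+1} - y(p),\ x(r_i) - x(r_{i+1}))$. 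The global maximum over all $p$, $i$, and the four symmetric L-orientations is the widest empty axis-parallel L-corridor.

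Sorting costs $O(n\log n)$ per choice of $p$ and the linear sweep adds $O(n)$ per $p$, giving total time $O(n^2\log n)$; since the sorted array and the suffix-minimum array are reused across iterations of $p$, total working space is $O(n)$. Degenerate configurations in which the problematic set is empty, where the L-corridor would be unbounded from inside, correspond to widest empty axis-parallel strips, which are computable in $O(n\log n)$ time via a one-dimensional maximum-gap computation on $\{x(p) \mid p\in P\}$ or $\{y(p) \mid p\in P\}$ and are handled separately.

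The main obstacle is establishing the structural claim rigorously. Following the template of Observation~\ref{obs:sq_conf}, one employs local slides: if the bottom is unpinned, translate the outer corner downward until it meets a point; if the right is unpinned, translate rightward; if $x_L$ or $y_T$ is not at the extreme determined by the problematic set, adjust the corresponding inner-arm side to increase the binding arm width. The subtle point is verifying that these slides preserve the L-shape, the orientation, and emptiness while never decreasing $w$, so the optimum is always attained at a configuration of the enumerated form. Additional care is needed for ties in coordinates and for the boundary between nondegenerate L-corridor and strip configurations, both of which are covered by the separate strip computation together with the $O(n^2)$ pair enumeration.
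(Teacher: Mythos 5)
Your proposal is correct and rests on the same core reduction as the paper's proof: the outer corner of an optimal axis-parallel L-corridor may be assumed to lie at one of the $O(n^2)$ grid points $(x(q),y(p))$ for $p,q\in P$, and once that corner is fixed, the best inner (reflex) corner is read off from the extreme $x$- and $y$-coordinates of the points in the corresponding open quadrant. Where you genuinely differ is in how those quadrant extrema are obtained. The paper queries each of the $O(n^2)$ corners independently in $O(\log n)$ time using Chazelle's segment-dragging structure; you instead batch the corners row by row, sorting the points above a fixed bottom-pinner $p$ by $x$-coordinate and precomputing suffix minima of their $y$-coordinates, so that each candidate right-pinner is then handled in $O(1)$ time. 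This yields a more elementary, self-contained argument with no special data structures, at the same $O(n^2\log n)$ time and $O(n)$ space; in fact, with a single global presort your sweep would run in $O(n^2)$ total, slightly improving the stated bound. Your explicit pinning lemma (both finite outer sides touch a point, else slide without decreasing the width) is a legitimate and somewhat more careful justification of the grid-point enumeration than the paper offers. The caveats you flag --- coordinate ties, a pinning point sitting exactly at the outer corner (so that $q\notin P_p$), and an empty problematic set degenerating to a strip --- are real but routine, and are handled no less completely in your write-up than in the paper's.
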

\begin{proof}
Consider the induced grid by points in $P$ obtained by
drawing the horizontal and vertical lines through every point in $P$.
For each grid point $o$, consider the four quadrants at $o$ obtained by
the vertical and horizontal lines through $o$.
Consider the first quadrant, or equivalently the top-right quadrant, $Q$.
By finding out the point with the smallest $x$-coordinate
and the point with the smallest $y$-coordinate among $P\cap Q$,
we can identify a candidate L-shape empty corridor with corner $o$.
This query can be handled in $O(\log n)$ time
by the segment dragging data structure by Chazelle~\cite{c-asdii-88}
which can be built in $O(n \log n)$ time using $O(n)$ space.

Since there are $O(n^2)$ grid points, we can compute
a maximum-width empty axis-parallel corridor in $O(n^2 \log n)$ time.
\end{proof}

Now, we suppose that the solution falls in case (i)
of Observation~\ref{obs:sq_conf},
so that both two opposite sides of its outer square contains a point of $P$.
Without loss of generality, we assume that each of the top and bottom sides
of the outer square of our target annulus contain a point of $P$.
The other case can be handled in a symmetric way.

First, as preprocessing, we sort $P$ in the decreasing order of
their $y$-coordinates,
so $P = \{p_1, p_2, \ldots, p_n\}$, where $y(p_1) \geq \cdots \geq y(p_n)$.
We also maintain the list of points in $P$ sorted in their $x$-coordinates.
Our algorithm runs repeatedly for all pairs of indices $(i, j)$
with $1\leq i < j-1 < n$,
and finds a maximum-width empty square annulus such that
the top and bottom sides of its outer square contain $p_i$ and $p_j$,
respectively.

From now on, we assume $i$ and $j$ are fixed.
Let $P_{ij} := \{p_{i+1}, \ldots, p_{j-1}\}$,
$r := (y(p_i)-y(p_j))/2$,
and $\ell$ be the horizontal line with $y$-coordinate $y(\ell) = (y(p_i)+y(p_j))/2$.
Provided that $p_i$ lies on the top side and $p_j$ lies on the bottom side
of the outer square,
the possible locations of its center is constrained to be on $\ell$.
For a possible center $c\in \ell$,
let $S(c)$ be the square centered at $c$ with radius $r$.
Then, the corresponding inner square $S'(c)$ is determined by center $c$
and the farthest point among those points in $P_{ij}$
lying in the interior of $S(c)$.
Here, the distance is measured by the $L_\infty$ metric.
More precisely, the radius of $S'(c)$ is exactly
$\max_{p\in P_{ij}\cap S(c)} \| p-c\|_\infty$,
where $\|\cdot\|_\infty$ denotes the $L_\infty$ norm,
and we want to minimize this over the relevant segment $C \subset \ell$
such that $S(c)$ for $c\in C$ contains $p_i$ and $p_j$ on its
top and bottom sides.
Note that the length of segment $C \subset \ell$ is
exactly $2r - |x(p_i) - x(p_j)|$.

For the purpose, we define $f_p(c)$ for each $p\in P_{ij}$
and all $c \in \ell$ to be
\[ f_p(c) = \begin{cases}
            \| p - c \|_\infty & \text{ if $\|p - c\|_\infty < r$} \\
            0 & \text{ otherwise}
            \end{cases},
\]
and let $F(c):=\max_{p \in P_{ij}} f_p(c)$ be their upper envelope.
Note that our goal is to minimize the upper envelope $F$ over $C \subset \ell$.

 \begin{figure}[tb]
 \centering
 \includegraphics[width=.8\textwidth]{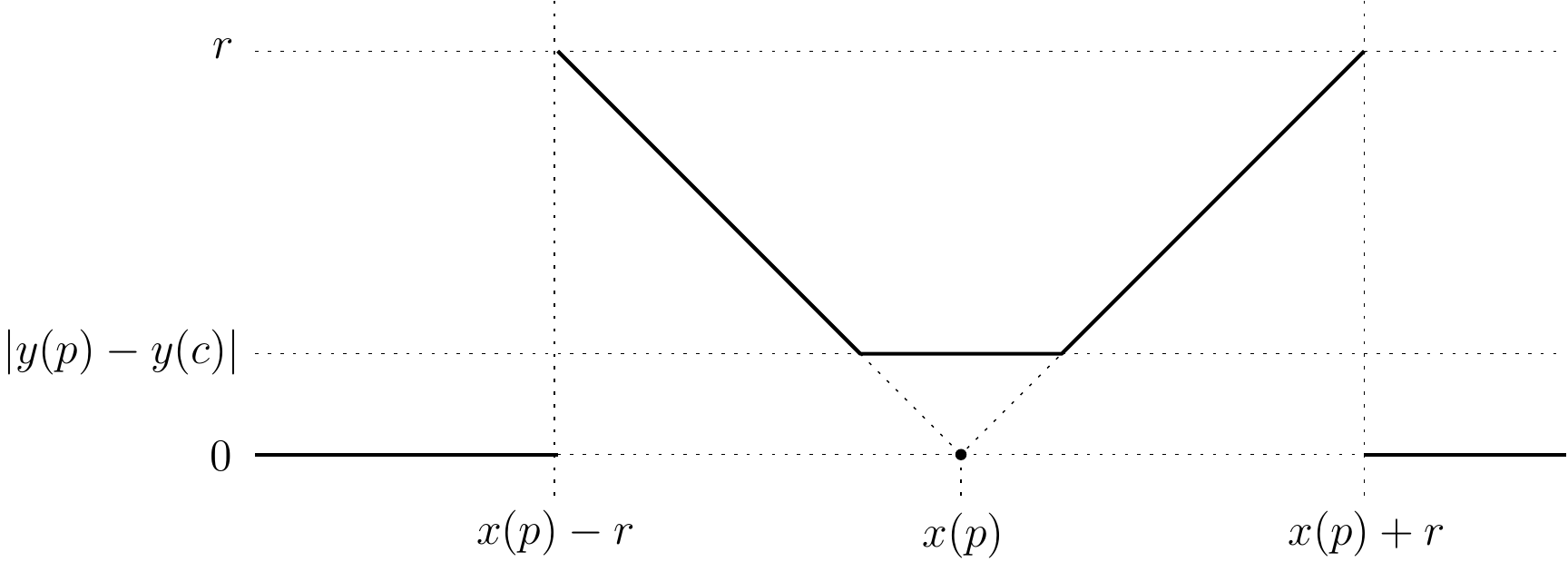}
 \caption{Illustration of the graph of function $f_p$ for $p\in P$.}
 \label{fig:function_f}
 \end{figure}

Note that $x(p)-r < x(c) < x(p)+r$, or equivalently, $p \in S(c)$
if and only if $\|p-c\|_\infty < r$ for any $p\in P_{ij}$.
As also observed in Bae~\cite{bae-cmwea-2018},
the function $f_p$ is piecewise linear with at most three pieces
over $c \in \ell$ such that $p \in S(c)$,
and the three pieces have slopes $-1$, $0$, and $1$
in this order.
Moreover, the height of the part of slope $0$ is exactly
$|y(p) - y(c)| = |y(p) - y(\ell)|$ and
the extensions of the two pieces of slope $-1$ and $1$ always
cross at the point of $x$-coordinate $x(p)$ and height $0$.
See \figurename~\ref{fig:function_f}.
These properties of $f_p$ can be easily verified from
the behavior of the $L_\infty$ norm.
By the above observations,
one can explicitly compute $F$
by computing the upper envelope of $O(n)$ line segments in $O(n\log n)$ time~\cite{h-fuels-89}.
Applying this to all possible pairs $(i, j)$ yields
an $O(n^3 \log n)$ time algorithm.

In the following, we show how to improve this to $O(n^3)$ time
by decomposing the function $f_p$ into two functions $g_p$ and $h_p$.
For each $p\in P_{ij}$ and $c\in \ell$, define
\[ g_p(c) = |x(p) - x(c)| \quad \text{and} \quad h_p(c) = |y(p) - y(c)|
 \quad \text{ if $\|p - c\|_\infty < r$},
\]
and $g_p(c) = h_p(c) = 0$, otherwise.
Also, let 
 \[ G(c) := \max_{p \in P_{ij}} g_p(c) \quad \text{and} \quad
 H(c) := \max_{p\in P_{ij}} h_p(c). \]
As $\|p - c\|_\infty = \max\{|x(p) - x(c)|, |y(p) - y(c)|\}$,
it is obvious that $f_p(c) = \max\{g_p(c), h_p(c)\}$,
and hence $F(c) = \max\{G(c), H(c)\}$.
We now show that the functions $G$ and $H$ can be explicitly computed
in $O(n)$ time.
\begin{lemma} \label{lem:sq_GH}
 The functions $G$ and $H$ can be explicitly computed in $O(n)$ time.
\end{lemma}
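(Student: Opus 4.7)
The plan is to exploit the fact that the points of $P_{ij}$ can be made available in $x$-sorted order in $O(n)$ time, and then to reduce each of $G$ and $H$ to a sliding-window problem. As a one-time preprocessing that is independent of the pair $(i,j)$, I sort $P$ by $x$-coordinate and annotate each point with its rank in the $y$-sorted order. Then for any fixed $(i,j)$, a single scan of this $x$-sorted list, keeping only those points whose $y$-rank lies in $[i+1,\,j-1]$, produces $P_{ij}$ in $x$-sorted order in $O(n)$ time.

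For $H$: since $y(c) = y(\ell)$ is fixed along $\ell$, the value $h_p(c) = |y(p) - y(\ell)|$ is a constant $h_p^\star$ for each $p\in P_{ij}$, and it is active exactly on the interval $I_p := (x(p)-r,\, x(p)+r)$ (the condition $|y(p)-y(\ell)|<r$ is automatic for $p\in P_{ij}$). Thus each $h_p$ is a rectangular pulse of height $h_p^\star$ on $I_p$, and $H$ is their upper envelope. I merge the $2|P_{ij}|$ endpoints $\{x(p)\pm r\}$ in $O(n)$ to form the event stream in left-to-right order, then sweep it while maintaining the active set in a monotone-decreasing deque keyed by $h_p^\star$; this yields the sliding-window maximum with amortized $O(1)$ work per event, producing the full piecewise-constant graph of $H$ in $O(n)$ time.

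For $G$: the key observation is that for any $c$ with nonempty window $X_c := \{x(p) : p \in P_{ij},\, |x(p)-x(c)|<r\}$,
\[ G(c) = \max\bigl(x(c) - \min X_c,\ \max X_c - x(c)\bigr), \]
since the maximum of $|x(p)-x(c)|$ over in-window points is attained by the leftmost or rightmost such point (the formula remains correct when $x(c)$ lies outside $[\min X_c, \max X_c]$, as the spurious negative term is dominated by the positive one). I maintain two monotone pointers $L$ and $R$ into the $x$-sorted $P_{ij}$: $L$ indexes the smallest $x(p)$ with $x(p) > x(c)-r$, and $R$ indexes the largest $x(p)$ with $x(p) < x(c)+r$. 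Both pointers advance monotonically as $x(c)$ sweeps from left to right, accounting for $O(n)$ total updates. Between consecutive updates, $\min X_c = x(p_L)$ and $\max X_c = x(p_R)$ are constant, so $G(c)$ is the maximum of two linear functions of slopes $+1$ and $-1$; consequently, $G$ is piecewise linear with $O(n)$ pieces and can be written out in $O(n)$ time.

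The main difficulty I anticipate is justifying that the upper envelope of the $|P_{ij}|$ tent-shaped graphs $g_p$, which for generic piecewise-linear functions would demand $\Omega(n \log n)$ via the envelope algorithm of~\cite{h-fuels-89}, in our setting really collapses to tracking just the leftmost and rightmost in-window $x(p)$. This collapse hinges on every $g_p$ having the same slopes $\pm 1$ and the same support width $2r$, which forces the extremum of $|x(p)-x(c)|$ over the window to be attained at one of the two endpoints of the window's point set and eliminates the need for a priority queue or logarithmic search. Combined with the deque-based sliding window for $H$, this yields the claimed $O(n)$ time for both functions.
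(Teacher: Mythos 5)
Your proof is correct and follows essentially the same route as the paper's: both obtain $P_{ij}$ in $x$-sorted order in linear time and then compute each envelope by a single left-to-right sweep that exploits the uniform slopes ($\pm 1$) and uniform support width ($2r$) of the pieces $g_p$ and $h_p$. Your two-pointer identity $G(c)=\max\bigl(x(c)-\min X_c,\ \max X_c - x(c)\bigr)$ and the deque-based sliding-window maximum for $H$ are just concrete implementations of the paper's ``upper envelope of the parallel slope-$\mp 1$ segments / of equal-length horizontal segments from sorted order'' steps.
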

\begin{proof}
First recall that we know the list of points in $P$ sorted
in their $x$-coordinates in the preprocessing.
From this list, we can find out the list of points in $P_{ij}$
sorted in the $x$-coordinates in linear time.

By above discussions, the graph of function $g_p$ consists of
exactly two line segments, except of the two parts of zero, of slopes $-1$ and $1$,
and the breakpoint lies at $x(p)$.
That is, all the functions $g_p$ are translates of this V-shaped segments.
We first compute the upper envelope of all segments of slope $-1$ of $g_p$.
This can be done in $O(n)$ time since we know their sorted order.
Analogously, we compute the upper envelope of all segments of slope $1$ of $g_p$
in $O(n)$ time.
Merging these two envelope into the upper envelope $G$ can be done in $O(n)$ time.
Thus, we can compute $G$ in $O(n)$ time.

To compute $H$, observe that the graph of function $h_p$
consists of two zero parts and a horizontal line segment of length $2r$.
Since every non-zero segment in functions $h_p$ for $p\in P_{ij}$
has the same length and we know the sorted order,
we can compute their upper envelope $H$ in $O(n)$ time.
\end{proof}

Since the function $F$ is the upper envelope of $G$ and $H$,
we can compute $F$ in $O(n)$ time using the explicit description
of functions $G$ and $H$.
Note that the three functions $F$, $G$, and $H$ are piecewise linear
with $O(n)$ breakpoints.
Consequently, we can compute $F$ and find a lowest point of $F$
over $C \subset \ell$ in $O(n)$ time,
and hence a maximum-width empty square annulus of case (i) can be found
in $O(n^3)$ time.
Finally, we conclude the following theorem.
\begin{theorem} \label{thm:sq}
 Given $n$ points in the plane,
 a maximum-width empty square annulus can be computed
 in $O(n^3)$ time using $O(n)$ space.
\end{theorem}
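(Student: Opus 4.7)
The plan is to combine the structural dichotomy of Observation~\ref{obs:sq_conf} with the subroutines already developed in the section. For case (iii) of the observation, the outer square degenerates to a horizontal or vertical strip, and the problem reduces to finding the largest gap in $\{x(p)\mid p\in P\}$ or in $\{y(p)\mid p\in P\}$; this takes $O(n\log n)$ time and $O(n)$ space via sorting. For case (ii) the annulus is a widest empty axis-parallel L-shaped corridor, which by Theorem~\ref{thm:L-corridor} can be handled in $O(n^2\log n)$ time and $O(n)$ space. Both of these fit comfortably within the announced $O(n^3)$ bound, so the real work is case (i).

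For case (i), I would first sort $P$ once by $y$-coordinate and once by $x$-coordinate in $O(n\log n)$ time, and then enumerate all ordered pairs $(i,j)$ with $1\le i<j-1<n$ whose points are intended to lie on the top and bottom sides of the outer square. There are $O(n^2)$ such pairs. For each fixed pair I would set $r=(y(p_i)-y(p_j))/2$, let $\ell$ be the horizontal line at height $(y(p_i)+y(p_j))/2$, and restrict the center $c$ of the outer square to lie on the segment $C\subset\ell$ for which $S(c)$ actually has $p_i$ on its top side and $p_j$ on its bottom side; $C$ has length $2r-|x(p_i)-x(p_j)|$. The goal is to minimize the inner-square radius $F(c)=\max_{p\in P_{ij}} f_p(c)$ over $c\in C$.

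To make each pair $(i,j)$ cost only $O(n)$ time, I would invoke Lemma~\ref{lem:sq_GH}: the list of points in $P_{ij}$ sorted by $x$-coordinate can be extracted from the global $x$-sorted list in linear time, and then $G$ and $H$ are each assembled in $O(n)$ time from the V-shaped segments $g_p$ (merging the slope $-1$ and slope $+1$ envelopes) and from the equal-length horizontal segments $h_p$, respectively. Since $F=\max\{G,H\}$ and both $G$ and $H$ are piecewise linear with $O(n)$ breakpoints, their upper envelope $F$ can be computed by a single linear sweep, after which the minimum of $F$ over $C$ is read off in $O(n)$ time; the resulting annulus has width $r-F(c^*)$ for the minimizing $c^*$. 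Summed over the $O(n^2)$ pairs, and combined with symmetric handling of the case in which left and right (instead of top and bottom) sides carry points, case (i) runs in $O(n^3)$ time. All data structures maintained per pair are linear-size and can be discarded before the next iteration, so the total space usage remains $O(n)$.

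The one delicate point I expect is case analysis rather than computation: I would verify carefully that the three subcases in Observation~\ref{obs:sq_conf} truly cover every maximum-width configuration and that the symmetric version of case (i) (where $p_i,p_j$ lie on the left and right sides) is handled by the same algorithm with the axes interchanged, so that taking the best answer across the three cases and both orientations yields a global maximum. Once these combinatorial checks are in place, the complexity bound follows immediately from the per-pair linear cost established above.
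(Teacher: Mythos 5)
Your proposal is correct and follows essentially the same route as the paper: the three-way case analysis from Observation~\ref{obs:sq_conf}, with case (iii) reduced to a maximum gap, case (ii) handled by Theorem~\ref{thm:L-corridor}, and case (i) solved by enumerating the $O(n^2)$ top/bottom pairs and minimizing $F=\max\{G,H\}$ in $O(n)$ time per pair via Lemma~\ref{lem:sq_GH}. The symmetric left/right variant of case (i) and the $O(n)$ space bound are handled exactly as you describe.
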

\begin{proof}
Recall Observation~\ref{obs:sq_conf} stating that there are three cases
of a possible maximum-width empty square annulus.
In the overall algorithm, we check all the three possibilities
as follows:
For case (i), we run the above algorithm in $O(n^3)$ time.
For case (ii), we compute a maximum-width empty axis-parallel L-shaped corridor
in $O(n^2 \log n)$ time by Theorem~\ref{thm:L-corridor}.
For case (iii), we compute a maximum-width empty axis-parallel strip
in $O(n)$ time.
Thus, the correctness directly follows from Observation~\ref{obs:sq_conf}.
All these routines use $O(n)$ space.
\end{proof}

\section{Maximum-Width Empty Rectangular Annulus} \label{sec:rect}

In this section, we present an algorithm computing a maximum-width
empty rectangular annulus.
First we give several basic observations on maximum-width empty rectangular annuli.

\subsection{Configurations of empty rectangular annuli}

Consider any empty rectangular annulus $A$ and
the partition of $P$ induced by $A$.
As done for square annuli before,
one can enlarge the outer rectangle of $A$ and shrink its inner rectangle,
while keeping the partition of $P$ and not decreasing the width of $A$.
This results in the following observation.

\begin{observation} \label{obs:rect_conf1}
 There exists a maximum-width empty rectangular annulus such that
 each side of its outer rectangle either contains a point of $P$
 or lies at infinity,
 and every side of its inner rectangle contains a point of $P$.
\end{observation}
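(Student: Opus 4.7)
The plan is to adapt the argument of Observation~\ref{obs:sq_conf} to rectangles; this should be substantially simpler because a rectangular annulus carries no concentricity constraint, so the four sides of the outer rectangle $R$ and the four sides of the inner rectangle $R'$ can each be translated independently. Beginning with an arbitrary maximum-width empty rectangular annulus $A$ with outer and inner rectangles $R$ and $R'$, I will apply a sequence of one-sided translations that leaves the width and the induced partition of $P$ unchanged while yielding the claimed canonical form.

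The first phase will process the outer rectangle $R$: for each of its four sides that contains no point of $P$, I translate that side outward (upward for the top, downward for the bottom, etc.). Because $R$ only grows, no point of $P_\mathrm{in}$ is released from $R'$, and the annulus remains empty provided the moving side does not cross a point of $P$; I therefore halt the translation at the first such encounter, or let the side recede to infinity when no point of $P$ lies in that direction. The second phase will process the inner rectangle $R'$ symmetrically: for each side of $R'$ not already containing a point of $P$, I translate it inward, halting as soon as it first meets a point of $P_\mathrm{in}$. Validity of $A$ guarantees $P_\mathrm{in}$ is nonempty, so the stopping point exists---it is simply the extreme point of $P_\mathrm{in}$ in the relevant direction---and the annulus remains empty throughout.

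The only subtlety I expect is verifying non-interference of the translations and consistency with maximality. Outward motions of $R$'s sides act only on $R$, and inward motions of $R'$'s sides act only on $R'$, so they commute and preserve the inclusion $R' \subseteq R$, emptiness, and the partition $(P_\mathrm{in}, P_\mathrm{out})$. Each individual translation is non-decreasing in the corresponding side-width, so if any such step strictly increased the overall width (the minimum of the four side-widths), that would contradict the maximality of $A$; hence the width is preserved throughout. Once all eight translations have been carried out, each side of $R$ either contains a point of $P$ or lies at infinity, and each side of $R'$ contains a point of $P$, as claimed.
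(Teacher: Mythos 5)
Your proposal is correct and follows essentially the same route as the paper's proof: slide each empty side of the outer rectangle outward until it hits a point of $P$ or recedes to infinity, and slide each side of the inner rectangle inward until it hits a point of $P_\mathrm{in}$ (which is nonempty by validity), noting that these operations preserve emptiness and do not decrease the width. The additional remarks on commutativity and width preservation are fine but not needed beyond what the paper already states.
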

\begin{proof}
Let $A$ be any empty rectangular annulus
with outer rectangle $R_{out}$ and inner rectangle $R_{in}$.
We first enlarge $R_{out}$ to obtain a new outer rectangle $R'_{out}$ as follows:
If the top side of $R_{out}$ contains no point of $P$,
then we slide it upwards until it hits a point of $P$.
If this process stops at some point, then the top side contains a point of $P$;
otherwise, it is now at infinity.
Afterwards, we perform the same process for the other three sides,
one by one.
Hence, each side of the resulting rectangle $R'_{out}$ either
contains a point of $P$ or lies at infinity.

Next, we shrink $R_{in}$ by sliding each side of $R_{in}$ inwards
until it hits a point of $P$,
and let $R'_{in}$ be the resulting rectangle.
Since $A$ is assumed to be valid,
there is at least one point of $P$ in the interior or on the boundary of $R_{in}$,
which implies that every side of $R'_{in}$ contains a point of $P$.

Now, consider a new rectangular annulus $A'$ defined by
its outer rectangle $R'_{out}$ and inner rectangle $R'_{in}$.
By the above processes, it is obvious that
$A'$ is also empty and the width of $A'$ is not smaller than that of $A$.
Moreover, $A'$ satisfies the condition described in the statement.
By our construction, we show that there always exists such an snnulus $A'$
for any empty rectangular anuulus, so the observation follows.
\end{proof}

In Observation~\ref{obs:rect_conf1}, note that each side of a rectangle is considered to include its endpoints.
Thus, a point $p\in P$ can be contained in two adjacent sides of a rectangle
if $p$ is located at a corner.
See \figurename~\ref{fig:rect_conf}(left).

 \begin{figure}[tb]
 \centering
 \includegraphics[width=.8\textwidth]{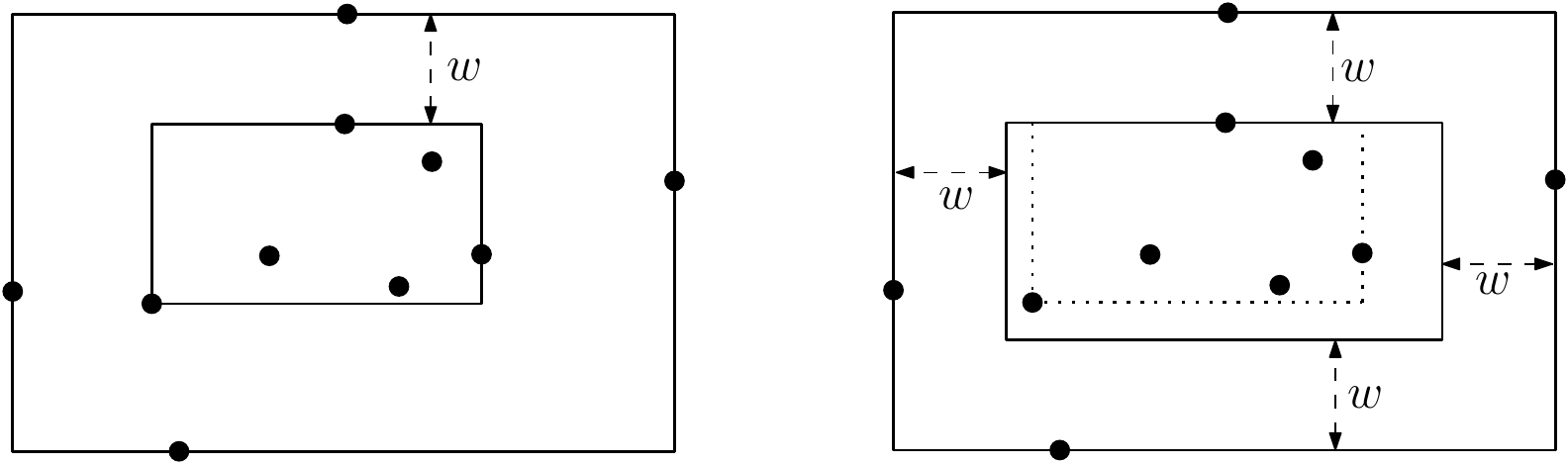}
 \caption{Empty rectangular annuli of maximum width $w$: (left) Each side of the inner and outer rectangles contains a point. (right) A maximum-width empty rectangular annulus that is uniform and also top-anchored.}
 \label{fig:rect_conf}
 \end{figure}

A rectangular annulus $A$ is said to be \emph{width-uniform}, or simply \emph{uniform}, if its top-width, bottom-width, left-width, and right-width are all equal to its width.
In the following observation, we show that we can focus only on uniform rectangular annuli to solve our problem.
\begin{observation} \label{obs:rect_conf_uniform}
 There exists a maximum-width empty rectangular annulus $A$ that is uniform
 such that the following property holds:
 each side of its outer rectangle either contains a point of $P$
 or lies at infinity, and
 at least one side of its inner rectangle contains a point of $P$.
\end{observation}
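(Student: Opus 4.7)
The plan is to start from an annulus furnished by Observation~\ref{obs:rect_conf1} and make it uniform by enlarging its inner rectangle without touching its outer one. Let $A$ be a maximum-width empty rectangular annulus whose outer rectangle $R$ has each side either containing a point of $P$ or at infinity, and whose inner rectangle $R'$ has a point of $P$ on every side; let $w$ denote the width of $A$, and let $t, b, l, r \geq w$ be its top-, bottom-, left-, and right-widths, respectively. Define $R''$ to be the axis-parallel rectangle each of whose four sides is pushed exactly $w$ inward from the corresponding side of $R$. By construction, the annulus $A''$ with outer rectangle $R$ and inner rectangle $R''$ has all four of its widths equal to $w$, so $A''$ is uniform of width exactly $w$.

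Next I would verify that $A''$ is an empty valid annulus. Since $t, b, l, r \geq w$, the rectangle $R''$ extends at least as far as $R'$ in each of the four cardinal directions, so $R' \subseteq R''$; consequently, the ring $R \setminus \mathrm{int}(R'')$ is contained in $R \setminus \mathrm{int}(R')$, and no point of $P$ lies in the interior of $A''$ because $A$ is empty. For validity, $P_\mathrm{in}'' := P \cap R'' \supseteq P_\mathrm{in} \neq \emptyset$ by the validity of $A$, and $P_\mathrm{out}'' := P \setminus P_\mathrm{in}''$ is nonempty because, in the nontrivial case $w > 0$, at least one side of $R$ must carry a point of $P$ (if all four sides were at infinity, $P_\mathrm{out} = \emptyset$ would already contradict the validity of $A$), and any such boundary point of $R$ lies outside $R''$ since $R''$ is strictly interior to $R$. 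Combined with the fact that $A''$ has the same width $w$ as $A$, this shows that $A''$ is itself a maximum-width empty rectangular annulus.

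Finally, at least one side of $R''$ must contain a point of $P$: since $w = \min\{t, b, l, r\}$, some width attains the minimum, and assuming without loss of generality $t = w$, the top side of $R''$ coincides with the top side of $R'$, which already contains a point of $P$ by Observation~\ref{obs:rect_conf1}. The outer-rectangle condition is inherited from $A$ because $R$ is unchanged, and the degenerate case $w = 0$ is handled trivially since $A$ itself is already uniform. The main conceptual obstacle is choosing the correct direction of deformation: shrinking $R$ would risk destroying the ``contains a point or lies at infinity'' property on its sides, whereas inflating $R'$ to the canonical uniform rectangle $R''$ simultaneously preserves all boundary-contact conditions and forces the side of $R'$ realizing the minimum width to survive in $R''$.
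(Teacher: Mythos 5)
Your proposal is correct and follows essentially the same route as the paper: keep the outer rectangle fixed and inflate the inner rectangle to the canonical uniform one of width $w$, then note that the side realizing the minimum width retains its point of $P$. Your write-up is, if anything, slightly more careful than the paper's about validity and the degenerate cases.
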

\begin{proof}
Let $A$ be a maximum-width empty rectangular annulus with the property described in Observation~\ref{obs:rect_conf1}.
Let $R_{out}$ and $R_{in}$ be the outer and inner rectangles of $A$,
and $w$ be the width of $A$.

Without loss of generality, we assume that $w$ is equal to the top-width of $A$, that is, the top-width is the smallest among the other three.
We then enlarge the inner rectangle $R_{in}$ to have a new inner rectangle $R'_{in}$ by sliding the left side to the left, the right side to the right, and the bottom side downwards so that the resulting annulus $A'$ formed by $R_{out}$ and $R'_{in}$ is uniform and the width of $A'$ is equal to $w$, the width of $A$.
See \figurename~\ref{fig:rect_conf} for an illustration.
Since $R_{in} \subseteq R'_{in}$ and thus $A' \subseteq A$,
we conclude that $A'$ is also empty of $P$.
The proof is done by also observing that every side of $R_{out}$ contains at least one point of $P$ and the top side of $R'_{in}$ contains at least one point of $P$ by Observation~\ref{obs:rect_conf1}.
\end{proof}

See \figurename~\ref{fig:rect_conf}(right) for an illustration of Observation~\ref{obs:rect_conf_uniform}.
This observation suggests a specific configuration of annuli for us to solve the problem.
First of all, we do not have to consider non-uniform annuli.
Moreover, candidate outer rectangles are defined by at most four points in $P$.
If we fix an outer rectangle $R$, then the inner rectangle that maximizes the width is also determined by searching points in $P \cap R$.
This already yields an $O(n^5)$-time algorithm for our problem.

Let $A$ be an empty rectangular annulus satisfying the condition described
in Observation~\ref{obs:rect_conf1}.
We call $A$ \emph{top-anchored} (or, \emph{bottom-anchored}, \emph{left-anchored}, \emph{right-anchored})
if both the top sides (or, bottom sides, left sides, right sides, resp.) of the outer and inner rectangles of $A$ contain a point of $P$.
For example, \figurename~\ref{fig:rect_conf}(right) shows an empty top-anchored rectangular annulus.

\begin{observation} \label{obs:rect_conf_anchored}
 There exists a maximum-width empty rectangular annulus $A$ that
 satisfies the condition described in Observation~\ref{obs:rect_conf_uniform}
 and is either top-anchored, bottom-anchored, left-anchored, or right-anchored.
\end{observation}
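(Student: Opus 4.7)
The plan is to show that the uniform maximum-width empty rectangular annulus furnished by Observation~\ref{obs:rect_conf_uniform} is already anchored on one of its four sides, so no further modification of the annulus is necessary. Concretely, I would start with an annulus $A$ guaranteed by Observation~\ref{obs:rect_conf_uniform}, with outer rectangle $R$, inner rectangle $R'$, and width $w$, and argue that every side of $R$ must in fact contain a point of $P$ (not merely lie at infinity); then a single point on a side of $R'$ will automatically pair with a point on the corresponding side of $R$ to give the anchoring.

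For this, I would first note that $w$ is finite. Assuming $|P|\geq 2$ (the degenerate case is trivial), any valid empty rectangular annulus has finite width, since $R'$ must contain at least one point of $P$ and $R$ must exclude at least one point from its interior, bounding $w$ in terms of the coordinates of $P$. Because $A$ is uniform, its top-, bottom-, left-, and right-widths all equal $w$, a finite number; hence none of the four sides of $R$ can lie at infinity, since that would force the corresponding width to be infinite and contradict uniformity. Combining this with the conclusion of Observation~\ref{obs:rect_conf_uniform}, each of the four sides of $R$ therefore contains a point of $P$.

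Next, Observation~\ref{obs:rect_conf_uniform} also guarantees that at least one side of $R'$ contains a point of $P$; without loss of generality, suppose this is the top side of $R'$. Since the top side of $R$ also contains a point of $P$ by the previous paragraph, this is exactly the definition of $A$ being top-anchored. Replacing ``top'' by whichever side of $R'$ actually contains a point of $P$ yields the claimed classification into top-, bottom-, left-, or right-anchored.

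I do not expect a real obstacle in this proof, since once Observations~\ref{obs:rect_conf1} and~\ref{obs:rect_conf_uniform} are in hand the argument is essentially immediate. The only point requiring a little care is the exclusion of sides at infinity, which is handled by combining the uniformity of $A$ with the finiteness of $w$; I would state the trivial base case $|P|\leq 1$ (or other degenerate configurations admitting no valid annulus) explicitly, so that the finiteness of $w$ is unambiguous.
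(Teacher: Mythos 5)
Your proof is correct and is essentially the paper's own argument run in the contrapositive direction: the paper supposes the annulus is not anchored and concludes that the outer side facing the occupied inner side, and hence by uniformity every outer side, lies at infinity, contradicting validity, whereas you use validity to bound the width and thereby exclude sides at infinity up front. The only cosmetic difference is that you establish that all four sides of the outer rectangle contain a point of $P$ (which tacitly uses that every side of the inner rectangle sits at a finite position), while the paper needs, and proves, this only for the single outer side opposite the inner side that is known to contain a point.
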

\begin{proof}
Let $A$ be a maximum-width empty rectangular annulus that satisfies
the condition described in Observation~\ref{obs:rect_conf_uniform}.
Note that $A$ is guaranteed to be uniform.
Now, suppose that $A$ is neither top-anchored, bottom-anchored, left-anchored,
nor right-anchored.
By Observation~\ref{obs:rect_conf_uniform},
at least one side of the inner rectangle of $A$ contains a point of $P$.
Without loss of generality, assume that the top side of the inner rectangle
contains a point of $P$.
Since $A$ is not top-anchored, the top side of the outer rectangle contains
no point of $P$ and is thus at infinity.
This implies that its top-width is unbounded.
Since $A$ is uniform, this should be the case for the other three
bottom-width, left-width, and right-width.
This means that all sides of the outer rectangle of $A$ lie at infinity,
and that no point of $P$ lies on the boundary or outside of the outer rectangle.
This contradicts to the assumption that $A$ is valid.
\end{proof}

Our algorithm will find an empty anchored and uniform rectangular annulus
of maximum width,
which is the correct answer to our problem by Observation~\ref{obs:rect_conf_anchored}.
In the following, we assume without loss of generality that
there exists a maximum-width empty annulus that is uniform and top-anchored,
and describe our algorithm for this case.
The other three cases can be handled analogously.


Let $P = \{p_1, p_2, \ldots, p_n\}$ be the given set of points,
sorted in the descending order of their $y$-coordinates,
that is, $y(p_1) \geq y(p_2) \geq \cdots \geq y(p_n)$.
Consider any empty top-anchored rectangular annulus $A$
that satisfies the condition of Observation~\ref{obs:rect_conf_uniform}.
Let $p_i \in P$ be the point lying on the top side of the outer rectangle of $A$.
By Observation~\ref{obs:rect_conf_uniform}, either
the bottom side of the outer rectangle is at infinity or
there is another point $p_j \in P$ for $i < j \leq n$ on it.
If the bottom side is at infinity, then we say that
a point $p_\infty$ at infinity in the $(-y)$-direction lies on the bottom side.
Thus, in either case, there is $p_j$ on the bottom side of the outer rectangle
for $i < j \leq n$ or $j = \infty$.

Since $A$ is top-anchored, there is a third point $p_k \in P$ on the top side of the inner rectangle of $A$.
Observe that the width of $A$ is determined by the $y$-difference of $p_i$ and $p_k$, that is, $y(p_i) - y(p_k)$.
Thus, the maximum width for top-anchored empty rectangular annuli is one among $O(n^2)$ values $\{y(p_i) - y(p_k) \mid 1 \leq i \leq k \leq n\}$.

The problem becomes even simpler if we fix $p_i$ on the top side of the outer rectangle, since the number of possible widths is reduced to $n$.
An outlook of our algorithm that computes a maximum-width empty top-anchored rectangular annulus is as follows:
(1) For each $p_i \in P$, find an empty annulus $A^*_i$
with $p_i$ lying on the top side of its outer rectangle whose width
is the maximum among the set $\{ y(p_i) - y(p_k) \mid i < k \leq n \}$
and then (2) output the one with maximum width among $A^*_i$ for all $i\in \{1, \ldots, n\}$.
In order to compute $A^*_i$, we try all possible points $p_j$ that bound the bottom side of the outer rectangle.

In the following subsections, we first study the case where two points $p_i$ and $p_j$ on the top and bottom sides are fixed,
and then move on to the case where only a point $p_i$ on the top side is fixed.
More precisely, we discuss a decision algorithm when two points on the top and bottom sides are fixed, and exploit it as a sub-procedure to solve the other case.

\subsection{Decision when two points on top and bottom are fixed}

Suppose that we are given $p_i$ and $p_j$ with $1\leq i+1 < j \leq n$ or $j=\infty$,
and we consider only empty rectangular anuuli whose outer rectangle contains $p_i$ and $p_j$ on its top and bottom sides, respectively.

Here, we consider the following decision problem.
\begin{center}
\noindent\framebox{\begin{minipage}{4.50in}	
\emph{Given}: A positive real $w>0$ \\
\emph{Task}: Does there exist an empty rectangular annulus of width at least $w$ whose outer rectangle contains $p_i$ and $p_j$ on its top and
bottom sides, respectively?
\end{minipage}}
\end{center}

Let $D_{ij}(w)$ denote the outcome of the above decision problem.

\begin{observation} \label{obs:dera}
 If $D_{ij}(w)$ is TRUE, then $D_{ij}(w')$ is TRUE for any $w'\leq w$.
 On the other hand, if $D_{ij}(w)$ is FALSE, then $D_{ij}(w')$ is FALSE for any $w'\geq w$.
\end{observation}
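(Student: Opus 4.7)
The plan is to unwind the definition of $D_{ij}(w)$ and show that the monotonicity is essentially tautological. The key point is that the condition ``width at least $w$'' is monotone decreasing in $w$ as a predicate on a fixed annulus, so any annulus that witnesses TRUE for some threshold $w$ automatically witnesses TRUE for every smaller threshold $w'$.

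Concretely, I would first prove the forward implication. Assume $D_{ij}(w)$ is TRUE. By definition there exists an empty rectangular annulus $A$ whose outer rectangle has $p_i$ on its top side and $p_j$ on its bottom side, and whose width is at least $w$. For any $w' \le w$ we have $\mathrm{width}(A) \ge w \ge w'$, so the \emph{same} annulus $A$ certifies $D_{ij}(w')$. No modification of $A$ is required; we simply observe that the existential witness carries over.

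Next, I would derive the second statement as the contrapositive of the first. Suppose $D_{ij}(w)$ is FALSE and fix any $w' \ge w$. If $D_{ij}(w')$ were TRUE, then applying the forward implication just proved (with the roles $w \leftrightarrow w'$ and using $w \le w'$) would yield $D_{ij}(w)$ TRUE, a contradiction. Hence $D_{ij}(w')$ is FALSE.

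There is no real obstacle here: the observation is a monotonicity statement that follows directly from the ``at least'' quantifier in the decision problem's formulation, and the purpose of stating it explicitly is to justify, later in the algorithm, the use of a parametric search or a similar monotonicity-based strategy over candidate widths $\{y(p_i)-y(p_k)\}$. For this reason I would keep the proof to just a few lines, making the logical dependence on the definition of $D_{ij}$ transparent without introducing any additional geometric argument.
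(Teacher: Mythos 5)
Your proof is correct and matches the paper's treatment: the paper states Observation~\ref{obs:dera} without proof precisely because, as you note, the monotonicity is immediate from the ``width at least $w$'' quantifier --- any witnessing annulus for threshold $w$ is a witness for every $w' \leq w$, and the FALSE direction is the contrapositive. Your few-line write-up supplies exactly the argument the paper leaves implicit.
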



Let $P_{ij} := \{p_{i+1}, \ldots, p_{j-1}\}$ for $i < j\leq n$,
and $P_{i\infty} := \{p_{i+1}, \ldots, p_{n}\}$.
In the following, we show that the decision problem for a given width $w>0$ can be solved by a combination of certain operations on points $P_{ij}$, namely, the \emph{$y$-range $x$-neighbor query} and
the \emph{range maximum-gap query}.
Each of the two operations is described as follows:
\begin{enumerate}[(i)] \denseitems
 \item The \emph{$y$-range $x$-neighbor query}:
 Given three real numbers $(x, y_1, y_2)$,
 this operation is to find two points $q_1$ and $q_2$ in $P_{ij}$ such that
 $q_1$ is the rightmost one among points $P_{ij} \cap [-\infty, x] \times [y_1, y_2]$ and $q_2$ is the leftmost one among points $P_{ij} \cap [x, \infty]\times [y_1, y_2]$.
 Either $q_1$ and $q_2$ may be undefined if there is no point of $P_{i, j}$ in the corresponding range.
 If $q_1$ is undefined, then we return $q_1$ as a point at infinity
 such that $x(q_1) = -\infty$ and $y(q_1) = y_1$;
 if $q_2$ is undefined, then we return $q_2$ such that $x(q_2) = \infty$
 and $y(q_2) = y_1$.
 \item The \emph{range maximum-gap query (in $x$-coordinates)}:
 Given two real numbers $(x_1, x_2)$,
 find the \emph{maximum gap} in the set of real numbers
 $\{x(p) \mid x_1 \leq x(p) \leq x_2, p\in P_{ij}\} \cup \{x_1, x_2\}$,
 where $x(p)$ denotes the $x$-coordinate of point $p$.
 The maximum gap in a set $X$ of real numbers is the maximum difference between two consecutive elements when $X$ is sorted.
 Notice that $x_1$ and $x_2$ are also included in the above set.
 Here, the output of the range maximum-gap query is to be the pair of values
 that define the maximum gap.
\end{enumerate}

 \begin{figure}[tb]
 \centering
 \includegraphics[scale=0.8]{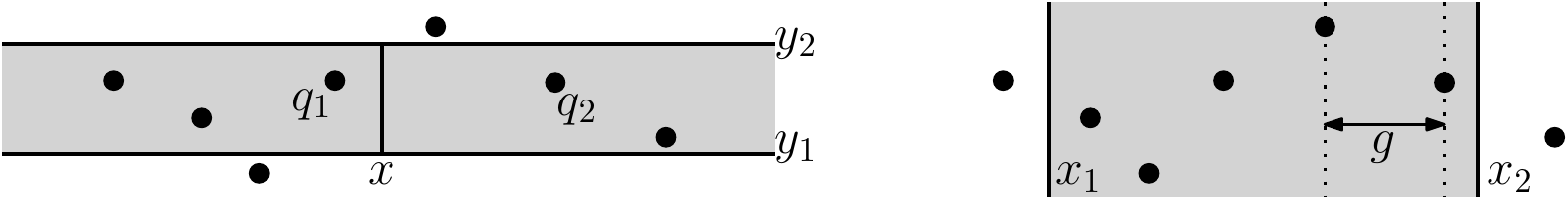}
 \caption{(left) the $y$-range $x$-neighbor query for $(x, y_1, y_2)$ and its answer $q_1$ and $q_2$, (right) the range maximum-gap query for $(x_1, x_2)$ and the maximum gap is $g$. }
 \label{fig:opt_conf}
 \end{figure}

We describe our algorithm for the decision problem
as Algorithm~\ref{alg:ours}.
\begin{algorithm}[tbh]
\KwIn{a width $w>0$}
\KwOut{$D_{ij}(w)$, and an empty rectangular annulus of width $w$ with
$p_i$ and $p_j$ lying on the top and bottom sides of its outer rectangle, respectively, if $D_{ij}(w)$ is TRUE}

 \If {$y(p_i) - y(p_j) < 2w$}
 {Return FALSE.}

 Perform a $y$-range $x$-neighbor query for $(x(p_i), y(p_i) - w, y(p_i))$, and let $q_l$ and $q_r$ be the output with $x(q_l) \leq x(p_i) \leq x(q_r)$.\\

 Perform a $y$-range $x$-neighbor query for $(x(p_j), y(p_j), y(p_j)+w)$, and let $q'_l$ and $q'_r$ be the output with $x(q'_l) \leq x(p_j) \leq x(q'_r)$.\\

 Let $p_l$ be the rightmost one in $\{q_l, q'_l\}$ and
 $p_r$ be the leftmost one in $\{q_r, q'_r\}$.\\

 \If{$\min\{x(p_i), x(p_j)\} < x(p_l)< \max\{x(p_i), x(p_j)\}$ or $\min\{x(p_i), x(p_j)\}<x(p_r)<\max\{x(p_i), x(p_j)\}$}
 {Return FALSE.}

 Perform a range maximum-gap query for $(x(p_l), \min\{x(p_i), x(p_j)\}+w)$, and let $(l, l')$ be the output and $g_l:=l'-l$ be the corresponding maximum gap.\\

 Perform a range maximum-gap query for $(\max\{x(p_i), x(p_j)\}-w, x(p_r))$, and let $(r', r)$ be the output and $g_r:=r-r'$ be the corresponding maximum gap.\\

 \If {$g_l\geq w$ and $g_r\geq w$}
 {Return TRUE, and the rectangular annulus of width $w$ whose outer rectangle is defined by the top and bottom sides through $p_i$ and $p_j$, respectively,
 and the left and right sides at $x = l$ and $x = r$, respectively.}
 \Else{Return FALSE.}

\caption{Decision algorithm}	
\label{alg:ours}
\end{algorithm}

Our decision algorithm, Algorithm~\ref{alg:ours}, evaluates $D_{ij}(w)$
for a given $w$.
As described in Algorithm~\ref{alg:ours},
the decision is made by four calls of the $y$-range $x$-neighbor queries
and the range maximum-gap queries.
Thus, its running time depends on how efficiently we can handle these queries.
If the algorithm decides that $D_{ij}(w)$ is TRUE,
then it also returns a corresponding rectangular annulus,
that is, an empty uniform annulus of width $w$ with $p_i$ and $p_j$
on the top and bottom sides of the outer rectangle.
This can be done by constructing its outer rectangle
since its width $w$ is fixed.

In the following, we show the correctness of our decision algorithm.
\begin{lemma}\label{lem:correct-alg1}
 Algorithm~\ref{alg:ours} correctly computes $D_{ij}(w)$ for any given $w > 0$
 in time $O(T)$, where $T$ is an upper bound on time needed to perform a $y$-range $x$-neighbor query or a range maximum-gap query.
 Moreover, if $D_{ij}(w)$ is TRUE, then an empty rectangular annulus of width $w$ such that $p_i$ and $p_j$ lie on the top and bottom side of its outer rectangle can be found in the same time bound.
\end{lemma}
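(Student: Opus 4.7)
The argument splits into a running-time bound and two directions of correctness, supported by the normalization that, by Observation~\ref{obs:dera} and Observation~\ref{obs:rect_conf_uniform}, it suffices to decide the existence of an empty rectangular annulus whose inner sides sit at distance exactly $w$ inside the corresponding outer sides. The time bound is immediate: Algorithm~\ref{alg:ours} does $O(1)$ arithmetic and invokes each of two $y$-range $x$-neighbor queries and two range maximum-gap queries at most once, so it runs in $O(T)$ time, and the annulus it outputs in the TRUE branch is specified by a constant number of coordinates.

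For soundness, suppose the algorithm reaches the final TRUE branch. I build the outer rectangle with sides $y=y(p_i)$, $y=y(p_j)$, $x=l$, $x=r$, and the inner rectangle obtained by moving each side inward by exactly $w$. Non-degeneracy of both rectangles follows from the opening check $y(p_i)-y(p_j) \geq 2w$ together with the query ranges and the hypotheses $g_l,g_r \geq w$, which also give $l \leq \min\{x(p_i),x(p_j)\}$ and $r \geq \max\{x(p_i),x(p_j)\}$. The top and bottom strips of the annulus are empty: any point of $P_{ij}$ in the relevant $y$-slab of height $w$ has $x$-coordinate at most $x(p_l)$ or at least $x(p_r)$ by the definition of $q_l,q_r,q'_l,q'_r$, and the successful ``in-between'' check places $x(p_l) \leq l$ and $x(p_r) \geq r$, so such points are outside the outer rectangle. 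The left and right strips are empty because the horizontal intervals of width $w$ they span fit inside the returned gaps $(l,l')$ and $(r',r)$, which by definition contain no point of $P_{ij}$.

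For completeness, take any normalized empty annulus with outer $x$-extent $[x_L,x_R]$. Then $y(p_i)-y(p_j) \geq 2w$, and emptiness of the top (respectively bottom) strip forces $x_L \geq x(q_l)$ and $x_R \leq x(q_r)$ (respectively with $q'_l,q'_r$), because otherwise the corresponding query output would lie in the interior of that strip. Combining with $x_L \leq \min\{x(p_i),x(p_j)\}$ and $x_R \geq \max\{x(p_i),x(p_j)\}$ yields $x(p_l) \leq x_L \leq \min\{x(p_i),x(p_j)\}$ and $\max\{x(p_i),x(p_j)\} \leq x_R \leq x(p_r)$, so the ``in-between'' check passes. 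Finally, emptiness of the left strip produces a gap of size at least $w$ between the largest element of the left-query multiset not exceeding $x_L$ and the smallest not below $x_L + w$ (both of which lie inside the left max-gap query range), so $g_l \geq w$; the symmetric argument gives $g_r \geq w$, and the algorithm returns TRUE. The main delicate point is the ``in-between'' check: I must argue that $x(p_l)$ lying strictly between $x(p_i)$ and $x(p_j)$ is the correct infeasibility criterion regardless of which of $x(p_i),x(p_j)$ is larger, given that $p_l$ is chosen as a maximum over two queries with different reference $x$-coordinates.
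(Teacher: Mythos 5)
Your proof is correct and follows essentially the same route as the paper's: your soundness direction matches the paper's argument that a TRUE output yields an empty uniform annulus built from $l$ and $r$ (contained in the horizontal strip between $p_i$ and $p_j$, hence empty of all of $P$), and your completeness direction is the contrapositive of the paper's case analysis of the three FALSE exits, using the same properties of the two queries. The ``delicate point'' you flag at the end is already discharged by the symmetric $\min/\max$ formulation in your completeness paragraph; the only residual degeneracy (namely $x(p_l)$ coinciding with $\max\{x(p_i),x(p_j)\}$, which makes the left max-gap range ill-posed) is glossed over equally by the paper's proof, which simply assumes $x(p_i)\leq x(p_j)$ and asserts $x(p_l)\leq x(p_i)$ once line~7 passes.
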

\begin{proof}
It is straightforward to see that the time complexity is $O(T)$,
since it runs the $y$-range $x$-neighbor queries
and the range maximum-gap queries at most four times.
Hence, we focus on the correctness proof.
Without loss of generality, we assume that $x(p_i)\leq x(p_j)$.

First, we show that if Algorithm~\ref{alg:ours} for input $w$ returns FALSE,
then $D_{ij}(w)$ is FALSE.
There are two cases where Algorithm~\ref{alg:ours} returns FALSE:
Algorithm~\ref{alg:ours} returns FALSE either (1) at line 2, (2) at line 7 or (3) at line 13.
If $y(p_i) - y(p_j) < 2w$, then we cannot build a rectangular annulus of width $w$ with $p_i$ and $p_j$ on the top and bottom sides of its outer rectangle.
Thus, if Algorithm~\ref{alg:ours} returns FALSE either at line 2,
then this is clearly correct.

Suppose case (2), so Algorithm~\ref{alg:ours} returns FALSE at line 7.
From lines 3 and 4, we have the following:
\begin{itemize}
\item $q_1$ is the rightmost point in
 $P_{ij} \cap ([-\infty, x(p_i)]\times[y(p_i)-w, y(p_i)])$.
\item $q_r$ is the leftmost point in
 $P_{ij} \cap ([x(p_i), \infty]\times[y(p_i)-w, y(p_i)])$.
\item $q'_1$ is the rightmost point in $P_{ij} \cap ([-\infty, x(p_j)]\times[y(p_j), y(p_i)+w])$.
\item $q'_r$ is the leftmost point in $P_{ij} \cap ([x(p_j), \infty]\times[y(p_j), y(p_j)+w])$.
\end{itemize}
Also, by line 5, $p_l$ is the rightmost one among $\{q_l, q'_l\}$,
and $p_r$ is the leftmost one among $\{q_r, q'_r\}$.
Then, in this case, we have
\[ x(p_i) < x(p_l) < x(p_j)\quad \text{or} \quad x(p_i)<x(p_r) <x(p_j),\]
by the condition in line 6.
Assume the first condition holds, so $x(p_i) < x(p_l) < x(p_j)$.
This implies that $p_l = q'_l$.
By our construction, we have $y(p_l) = y(q'_l) \in [y(p_j), y(p_j)+w]$.
Consider now any rectangular annulus $A$ of width $w$
whose outer rectangle $R_{out}$
contains $p_i$ and $p_j$ on its top and bottom sides, respectively,
and observe that $A$ must contain the point $p_l$.
Thus, such an annulus $A$ cannot be empty, so $D_{ij}(w)$ is FALSE.
The other case where it holds that $x(p_i)<x(p_r) <x(p_j)$ can be handled
analogously.

Now, suppose case (3), so Algorithm~\ref{alg:ours} returns FALSE at line 13.
Then, by lines 8 and 9, so that we have
\begin{itemize}
\item $(l, l')$ defines the maximum gap $g_l = l'-l$ in range $[x(p_l), x(p_i)+w]$.
\item $(r', r)$ defines the maximum gap $g_r = r-r'$ in range $[x(p_j)-w, x(p_r)]$.
\end{itemize}
In this case, we have
 \[ g_l < w \quad \text{or} \quad g_r < w,\]
by the condition in line 10.
Assume that the first condition holds, so $g_l < w$.
This means that there is no vertical strip of width at least $w$
among points in $P_{ij}$ in the $x$-range $[x(p_l), x(p_i)+w]$.
Consider now any rectangular annulus $A$ of width $w$
whose outer rectangle $R_{out}$
contains $p_i$ and $p_j$ on its top and bottom sides, respectively.
If the left side of $R_{out}$ is to the left of $p_l$, then
$A$ contains $p_l$ by the construction;
otherwise, any vertical strip of width $w$ between $[x(p_l), x(p_i)+w]$
contains a point in $P_{ij}$.
Since the left side of $R_{out}$ cannot be to the right of $p_i$,
this implies that no such annulus $A$ is empty in this case,
so $D_{ij}(w)$ is indeed FALSE.
The other case where it holds that $g_r$ can be handled
analogously.

Next, we show that if Algorithm~\ref{alg:ours} for input $w$ returns TRUE,
then $D_{ij}(w)$ is correctly TRUE.
If Algorithm~\ref{alg:ours} for input $w$ returns TRUE,
then we have the following:
\[ x(p_l) \leq x(p_i), x(p_r) \geq x(p_r),
  g_l \geq w,  g_r \geq w,\]
by the conditions in lines 6 and 10.
Then, we can construct a rectangle $R_{out}$ such that
its top and bottom sides go through $p_i$ and $p_j$, respectively,
and the $x$-coordinates of its left and right sides are $l$ and $r$, respectively.
Let $A$ be the rectangular annulus of width $w$ whose outer rectangle is $R_{out}$.
Then, by our construction, $A$ is empty with respect to $P_{ij}$.
Since $A$ is completely contained in the horizontal strip defined by
the horizontal lines through $p_i$ and $p_j$,
$A$ is also empty with respect to the whole set $P$.
This shows the existence of an empty annulus $A$ of width $w$ such that
the top and bottom sides of its outer rectangle contains $p_i$ and $p_j$,
respectively.
Therefore, we conclude that $D_{ij}(w)$ is TRUE.
In this case, Algorithm~\ref{alg:ours} indeed returns the annuls $A$.
This completes the proof.
\end{proof}

Note that the two operations can be easily done in linear time.
We now show how to perform them in logarithmic time with an aid of the following data structures.
\begin{itemize} \denseitems
 \item Let $\mathcal{D}$ be the data structure on $P$ described
 in Chazelle~\cite{c-asdii-88} that supports a segment dragging query
 for vertical line segments dragged by two horizontal rays.
 A segment dragging query is given by a segment and a direction along two rays
 and is to find the first point in $P$ that is hit by the dragged segment.
 This structure can be constructed in $O(n \log n)$ time using $O(n)$ storage
 \item Let $\mathcal{X}_{ij}$ be a 1D range tree for the $x$-coordinates
 of points in $P_{ij}$
 with an additional field $maxgap(v)$ at each node $v$,
 where $maxgap(v)$ denotes the maximum gap in the canonical subset of $v$.
 Note that $maxgap(v) = 0$ if the canonical subset of $v$ consists of
 only one element.
 The structure $\mathcal{X}_{ij}$ can be constructed
 using storage $O(|P_{ij}|)$~\cite{BCKO08}.
\end{itemize}

Now, suppose that we have already built these two structures
$\mathcal{D}$ and $\mathcal{X}_{ij}$.
Then, the two operations can be handled in $O(\log n)$ time as follows:
\begin{enumerate}[(i)] \denseitems
\item For a $y$-range $x$-neighbor query for $(x, y_1, y_2)$,
we perform two segment dragging queries on $\mathcal{D}$
for a vertical line segment
with endpoints $(x, y_1)$ and $(x, y_2)$ to both the left and the right directions.
These two queries result in
the rightmost point $q_1$ in the range $[-\infty, x] \times [y_1, y_2]$
and the leftmost point $q_2$ in the range $[x, \infty] \times [y_1, y_2]$.
\item For a range maximum-gap query for $(x_1, x_2)$,
we perform a 1D range search for the $x$-range $[x_1, x_2]$
on $\mathcal{X}_{ij}$ again to obtain a collection $C$ of $O(\log n)$ nodes.
The maximum gap in the $x$-coordinates of the points in $P_{ij}$ in the range $[x_1, x_2]$ can be found by comparing $O(\log n)$ values:
$maxgap(v)$ for all $v\in C$ and every gap between two consecutive canonical subset.
\end{enumerate}

Therefore, we conclude the following:
\begin{lemma} \label{lem:decision_with_trees}
 Suppose that we already have two tree structures $\mathcal{D}$
 and $\mathcal{X}_{ij}$.
 Then, Algorithm~\ref{alg:ours} correctly computes $D_{ij}(w)$
 for any given $w > 0$ in time $O(\log n)$.
\end{lemma}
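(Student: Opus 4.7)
The plan is to leverage Lemma~\ref{lem:correct-alg1}, which already establishes that Algorithm~\ref{alg:ours} correctly computes $D_{ij}(w)$ using $O(1)$ calls to the $y$-range $x$-neighbor query and the range maximum-gap query, and charges the total running time as $O(T)$ where $T$ bounds the cost of either query. Hence the only remaining task is to verify that, once $\mathcal{D}$ and $\mathcal{X}_{ij}$ are available, each of the two queries can be answered in $O(\log n)$ time.

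For the $y$-range $x$-neighbor query on input $(x, y_1, y_2)$, I would invoke Chazelle's segment dragging structure $\mathcal{D}$ twice: once to drag the vertical segment with endpoints $(x, y_1)$ and $(x, y_2)$ to the left (returning the first point of $P$ hit, which is the rightmost point of $P$ in $[-\infty, x]\times[y_1,y_2]$), and once to drag it to the right (returning the leftmost point in $[x, \infty]\times[y_1,y_2]$). Each drag takes $O(\log n)$ time by \cite{c-asdii-88}. If a drag returns no point, the convention stated in the definition of the query produces the corresponding point at infinity, so the $O(\log n)$ bound still holds. One subtlety is that $\mathcal{D}$ is built on the entire set $P$, not on $P_{ij}$; but because the query restricts the $y$-range to a horizontal strip lying strictly between $y(p_j)$ and $y(p_i)$ (or above $y(p_j)$ and below $y(p_i)$ within the strip of width $w$), only points of $P_{ij}$ can appear as answers, since $p_{1},\dots,p_{i}$ lie above $y(p_i)$ and $p_{j},\dots,p_n$ lie at or below $y(p_j)$ in the sorted order. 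So the correct answer with respect to $P_{ij}$ is returned.

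For the range maximum-gap query on $(x_1, x_2)$, I would use the 1D range tree $\mathcal{X}_{ij}$ augmented with $maxgap(v)$ at every node. A 1D range search on the key interval $[x_1, x_2]$ returns a canonical collection $C$ of $O(\log n)$ nodes whose canonical subsets partition $P_{ij} \cap [x_1, x_2]$. The maximum gap in $\{x(p) \mid p\in P_{ij}, x_1\le x(p)\le x_2\}\cup\{x_1,x_2\}$ is then the maximum of: the precomputed values $maxgap(v)$ for $v\in C$; the gaps between the rightmost $x$-coordinate stored in one canonical subset and the leftmost $x$-coordinate in the next canonical subset (when the nodes of $C$ are taken in left-to-right order); and the two boundary gaps between $x_1$ and the smallest $x$-coordinate of $P_{ij}\cap[x_1,x_2]$, and between the largest such $x$-coordinate and $x_2$. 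All of these values can be extracted from $C$ in $O(\log n)$ time by augmenting each node $v$ with the minimum and maximum $x$-coordinate in its canonical subset, so comparing $O(\log n)$ values yields the answer, together with the pair realizing the maximum gap.

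Combining the two bounds, each call made by Algorithm~\ref{alg:ours} takes $O(\log n)$ time; since at most four such calls are executed along with $O(1)$ additional arithmetic, the overall running time is $O(\log n)$. The main obstacle, such as it is, lies in this last query: one must be careful that the augmentation of $\mathcal{X}_{ij}$ accounts not only for the internal gaps within each canonical subset but also for the cross-subset gaps and the boundary gaps involving $x_1$ and $x_2$, which is why storing the extreme $x$-coordinates at each node is essential.
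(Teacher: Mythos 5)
Your proof follows essentially the same route as the paper: invoke Lemma~\ref{lem:correct-alg1} for correctness and the $O(T)$ charging scheme, then bound $T$ by $O(\log n)$ via two segment-dragging queries on $\mathcal{D}$ and a canonical-node decomposition on the augmented range tree $\mathcal{X}_{ij}$. Your added care about cross-subset and boundary gaps (storing extreme $x$-coordinates per node) and about why $\mathcal{D}$ built on all of $P$ still answers queries relative to $P_{ij}$ only makes the argument more complete than the paper's own sketch.
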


\subsection{Optimization when only a point on top is fixed}
Next, we describe how to find a maximum-width empty top-anchored rectangular annulus such that $p_i$ lies on the top side of the outer rectangle.

Let $w^*_i$ be the width of $A^*_i$.
Observe that $w^*_i$ lies in the set $W_i := \{y(p_i) - y(p_k) \mid i \leq k \leq n\}$.
Instead of solving the optimization problem for each pair $(p_i, p_j)$,
we can rather solve the optimization problem when only a point $p_i$ on top is fixed.
Our algorithm that computes $A^*_i$ and its width $w^*_i$
is presented as in Algorithm~\ref{alg:ours1}.
\begin{algorithm}[t]
\KwIn{A set $P=\{p_1, \ldots, p_n\}$ of points sorted by $y$-coordinates, and a point $p_i \in P$}
\KwOut{$A^*_i$ with $p_i$ and its width $w^*_i$}

Set $k$ to be $i+1$, $w$ to be $0$, and $A$ to be any annulus of width $0$.\\

Build the data structure $\mathcal{D}$ for $P$ and
initialize $\mathcal{X}$ to be $\mathcal{X}_{i,i}$.\\

\For {each $j = i+2, \ldots, n$ and $j = \infty$}
{%
  Insert $p_{j-1}$ if $j \leq n$, or $p_n$ if $j=\infty$,
  into $\mathcal{X}$,
  so that now $\mathcal{X} = \mathcal{X}_{ij}$.\\
  \While {$D_{ij}(y(p_i) - y(p_k))$ is TRUE and $k \leq j$}
  { Set $A$ to be the corresponding annulus of width $y(p_i) - y(p_k)$.\\
    Set $w$ to be $y(p_i) - y(p_k)$.\\
    Increase $k$ by $1$.}
}

Return the current $A$ as $A^*_i$ and the current $w$ as $w^*_i$.
\caption{Computing $A^*_i$.}	
\label{alg:ours1}
\end{algorithm}

\begin{lemma} \label{lem:correct_opt}
 Algorithm~\ref{alg:ours1} can be implemented in $O(n \log n)$ time
 and $O(n)$ space for a fixed $p_i \in P$.
 Also, it correctly computes $w^*_i$ and $A^*_i$.
\end{lemma}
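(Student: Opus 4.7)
The plan is to split the proof into three parts: correctness, running time, and the choice of data structure needed to realize the claimed time bound. Correctness rests on the monotonicity of $D_{ij}$ in the width parameter (Observation~\ref{obs:dera}) combined with the fact that $k$ never decreases as the outer loop advances. Concretely, let $(j^*, k^*)$ be a pair with $i+1 < j^* \leq n$ (or $j^*=\infty$) that realizes $w^*_i = y(p_i) - y(p_{k^*})$. When the outer loop reaches $j = j^*$, I would argue a dichotomy: either the current value of $k$ already exceeds $k^*$, in which case the stored width $w$ is already at least $w^*_i$ because $k$ only advanced on TRUE decisions; or $k \leq k^*$, and then Observation~\ref{obs:dera} gives $D_{i,j^*}(y(p_i)-y(p_k)) = \textrm{TRUE}$ for all intermediate values, so the inner while loop pushes $k$ past $k^*$ and records the width $y(p_i)-y(p_{k^*})$. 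No $k' > k^*$ can give a valid annulus with top-point $p_i$ by the definition of $w^*_i$, so the final $w$ is exactly $w^*_i$ and the returned $A$ is a corresponding annulus $A^*_i$.

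For the time bound I would use a standard amortized argument. The variable $k$ is monotone non-decreasing throughout the entire run of Algorithm~\ref{alg:ours1} and is bounded above by $n+1$, so the total number of successful iterations of the inner while loop summed over all $j$ is $O(n)$. On top of that, the outer loop contributes at most one failing while-test per value of $j$, for another $O(n)$ decision calls. By Lemma~\ref{lem:decision_with_trees}, each call to $D_{ij}$ costs $O(\log n)$ once $\mathcal{D}$ and $\mathcal{X}_{ij}$ are available, giving $O(n \log n)$ for all decision calls combined.

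The main obstacle is to sustain the logarithmic per-query cost of Lemma~\ref{lem:decision_with_trees} while $\mathcal{X}$ is being rebuilt from $\mathcal{X}_{i,i}$ into $\mathcal{X}_{i,n}$ by $O(n)$ successive insertions in line 4. The structure $\mathcal{D}$ is static and can be built once in $O(n\log n)$ time and $O(n)$ space using Chazelle's segment-dragging structure, but $\mathcal{X}$ must be dynamic. I would replace the static 1D range tree of Lemma~\ref{lem:decision_with_trees} by a balanced binary search tree (for example a red-black tree) on the $x$-coordinates of the currently inserted points, where each internal node $v$ stores the minimum, the maximum, and the field $\mathit{maxgap}(v)$ for the $x$-coordinates in its subtree. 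These three fields are composable from the corresponding fields of the two children and the element at $v$, so they can be maintained under insertions and rotations in $O(1)$ work per affected node, yielding $O(\log n)$ per insertion and $O(n \log n)$ total over all insertions. A range maximum-gap query for $(x_1, x_2)$ decomposes the range $[x_1,x_2]$ into $O(\log n)$ canonical subtrees; the answer combines their stored $\mathit{maxgap}$ values together with the gaps between consecutive canonical subtrees and with the two boundary gaps to $x_1$ and $x_2$, in $O(\log n)$ time. The $y$-range $x$-neighbor query is answered by two segment-dragging queries on $\mathcal{D}$ exactly as before. Summing the three $O(n\log n)$ terms gives the time bound, and since both $\mathcal{D}$ and the augmented BST use $O(n)$ storage with only $O(1)$ additional bookkeeping for $k$, $w$, and the current best annulus, the space bound is $O(n)$.
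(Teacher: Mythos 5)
Your proposal is correct and follows essentially the same route as the paper: the same amortized counting of decision calls via the monotonicity of $k$, the same $O(\log n)$-per-query data structures (static segment-dragging structure plus an insertion-only augmented search tree for the maximum gaps), and the same overall correctness invariant. If anything, you are more explicit than the paper in two places it glosses over—justifying the loop invariant via the monotonicity of $D_{ij}$ in $w$ (Observation~\ref{obs:dera}) applied to an optimal pair $(j^*,k^*)$, and spelling out how $\mathit{maxgap}$ is maintained under rotations in a dynamic balanced tree.
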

\begin{proof}
In Algorithm~\ref{alg:ours1}, we maintain several variables:
\begin{itemize} \denseitems
\item $A$ is the currently best empty top-anchored annulus with $p_i$
on the top side of its outer rectangle.
\item $w$ is the width of $A$.
\item $k$ is the index such that the next larger width $y(p_i) - y(p_k)$
 is currently being tried.
 It always holds that $w = y(p_i) - y(p_{k-1})$.
\item $\mathcal{X}$ is the 1D range tree with additional field $maxgap(v)$ at each node defined as above on the $x$-coordinates of the points in $P_{ij}$,
 that is, $\mathcal{X} = \mathcal{X}_{ij}$ in the main loop.
\end{itemize}
After initializing these variables properly in lines 1-2,
the main loop runs for $j = i+2, \ldots, n$.
Let $w_{ij}$ be defined as above.
For each $j$, we keep the best annulus we have seen so far as $A$,
not computing the exact value of $w_{ij}$.
That is, we keep
\[ w = \max\{0, w_{i(i+2)}, w_{i(i+3)}, \ldots, w_{ij}\} \]
as the loop invariant for each $j$.
In this way, after the main loop has finished at $j = \infty$, we have
 \[ w = \max_{j=i+2, \ldots, n, \infty} w_{ij} = w^*_i \]
and the corresponding annulus of width $w=w^*_i$ is stored in variable $A$.

Hence, the correctness of Algorithm~\ref{alg:ours1} is guaranteed.
The time complexity is bounded by $O(n \log n)$ as follows.

Lines 1-2 in Algorithm~\ref{alg:ours1} takes $O(n \log n)$ time
since $\mathcal{D}$ can be built in $O(n \log n)$ time~\cite{c-asdii-88}
and $\mathcal{X}_{ii}$ is initialized in $O(1)$ time
as $P_{i,i} = \emptyset$.
In line 4, a point $p_{j-1}$ or $p_n$ is inserted into two tree data structures $\mathcal{T}$ and $\mathcal{X}$.
The structures $\mathcal{X}$ is known to support an insertion in logarithmic time~\cite{BCKO08}.
The additional information $maxgap(v)$ at each node $v$ of $\mathcal{X}$
can be correctly updated in the bottom-up fashion through the path to the root
from the newly inserted node.
Hence, line 4 can be implemented in $O(\log n)$ time.
In line 5, we call the decision algorithm, Algorithm~\ref{alg:ours},
and one execution of line 5 takes $O(\log n)$ time
by Lemma~\ref{lem:decision_with_trees},
since the necessary data structures $\mathcal{D}$ and
$\mathcal{X} = \mathcal{X}_{ij}$ are provided.
Lines 6-8 takes only $O(1)$ time.

To conclude the total running time,
observe that line 4 is executed at most $n-3$ times.
The ``while'' loop in line 5 is executed at most $2(n-3)$ times:
The number of times when the while-condition is false is bounded by $n-3$,
while the number of times when the while-condition is true is also bounded
by $n-3$ since $k$ increases by one whenever this is the case and $k$ cannot be more than $n$.

Consequently, the time complexity of Algorithm~\ref{alg:ours1} is bounded by
$O(n \log n)$.
Since the segment dragging query structure $\mathcal{D}$ and
the 1D range tree $\mathcal{X}$ use $O(n)$ space~\cite{c-asdii-88,BCKO08},
the space usage of Algorithm~\ref{alg:ours1} is bounded by $O(n)$.
\end{proof}

\subsection{Putting it all together}

We are now ready to describe the overall algorithm to solve the MaxERA problem.
Under the assumption that there exists a maximum-width empty rectangular annulus
$A^*$ that satisfies the condition of Observation~\ref{obs:rect_conf_uniform}
and is top-anchored,
we excute Algorithm~\ref{alg:ours1} for each $i=1, \ldots n-1$
and choose the one with the maximum width as $A^*$.
Its correctness is guaranteed by Lemma~\ref{lem:correct_opt}.
The other three cases where there is a maximum-width empty rectangular annulus
that satisfies the condition of Observation~\ref{obs:rect_conf_uniform}
and is either bottom-anchored, left-anchored, or right-anchored,
can be handled in a symmetric way.
Thus, the overall algorithm runs for the four cases
and outputs one with the maximum width.
\begin{theorem} \label{thm:rect}
 Given a set $P$ of $n$ points in the plane,
 a maximum-width rectangular annulus that is empty with respect to $P$
 can be computed in $O(n^2 \log n)$ time and $O(n)$ space.
\end{theorem}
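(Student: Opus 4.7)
The plan is to assemble the overall bound by combining Observation~\ref{obs:rect_conf_anchored} with Lemma~\ref{lem:correct_opt} and exploiting the symmetry between the four anchoring cases. By Observation~\ref{obs:rect_conf_anchored}, there is a maximum-width empty rectangular annulus that is uniform and is either top-, bottom-, left-, or right-anchored. So first I would reduce the problem to four sub-problems, one per anchoring direction, and argue that it suffices to solve each in $O(n^2 \log n)$ time and $O(n)$ space, as the others are completely symmetric.

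Next I would focus on the top-anchored case. The outer rectangle's top side is anchored to some point of $P$, so I iterate over the $n$ choices $p_i \in P$ for the anchor point. For each fixed $p_i$, I invoke Algorithm~\ref{alg:ours1}, which by Lemma~\ref{lem:correct_opt} correctly returns a maximum-width empty top-anchored annulus $A_i^*$ with $p_i$ on the top side of its outer rectangle, in $O(n \log n)$ time and $O(n)$ space. Summed over $i = 1, \dots, n-1$, this yields $O(n^2 \log n)$ time. Taking the widest of these $A_i^*$ gives the optimal top-anchored empty annulus.

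For space, I would observe that the $O(n)$ auxiliary structures ($\mathcal{D}$ and $\mathcal{X}$) used inside one invocation of Algorithm~\ref{alg:ours1} can be discarded and rebuilt between iterations of $i$, so total space remains $O(n)$. Moreover, only the currently best annulus and its width need to be carried across iterations. Running the same procedure for the bottom-, left-, and right-anchored configurations (by obvious reflections/rotations of the point set, or equivalently by reindexing the coordinates) and returning the widest annulus found across all four cases yields the final answer.

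The only part that is not completely mechanical is justifying correctness at the top level: namely, that taking the maximum over the four anchoring cases actually produces a global optimum. This follows directly from Observation~\ref{obs:rect_conf_anchored}, which guarantees that some maximum-width empty rectangular annulus falls into one of these four configurations. With that in hand, the time and space bounds follow from summing the per-$i$ cost from Lemma~\ref{lem:correct_opt}, and the theorem is complete.
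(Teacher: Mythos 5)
Your proposal is correct and follows essentially the same route as the paper: reduce to the four anchoring cases via Observation~\ref{obs:rect_conf_anchored}, call Algorithm~\ref{alg:ours1} for each of the $n$ choices of the anchor point $p_i$, and combine the $O(n\log n)$ time and $O(n)$ space bounds from Lemma~\ref{lem:correct_opt}. No gaps; the symmetry argument for the other three cases and the reuse of $O(n)$ space across iterations are exactly what the paper does.
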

\begin{proof}
The correctness follows from the above discussion
and Observation~\ref{obs:rect_conf_anchored}.

For the time complexity,
we mainly call Algorithm~\ref{alg:ours1} $O(n)$ times.
Thus, it takes $O(n^2 \log n)$ time by Lemma~\ref{lem:correct_opt}.
The space usage can be bounded by $O(n)$ again by Lemma~\ref{lem:correct_opt}.
\end{proof}

\section{Concluding Remarks}
\label{sec:conclusions}
In this paper, we addressed the problem of computing a maximum-width empty
square or rectangular annulus that avoids a given set $P$ of $n$ points in the plane, and presented two efficient algorithms.
Our algorithms run in $O(n^3)$ and $O(n^2 \log n)$ time, respectively, 
for a square and rectangular annulus.
Note that our algorithms are first nontrivial algorithms that solve the problems,
and considered to be efficient compared to the currently best algorithm
for the circular counterpart, which runs in $O(n^3 \log n)$ time~\cite{dhmrs-leap-03}.

There are two obvious open questions.
One asks an improved algorithm with less running time, 
while the other asks a lower bound of the empty annulus problem.
The circular, square, and rectangular versions of the problem are now known to
be solved in $O(n^3 \log n)$, $O(n^3)$, and $O(n^2 \log n)$ time, respectively.
At this moment, it seems difficult to improve each of these upper bounds.
On the other hand, no nontrivial lower bound, other than $\Omega(n)$, 
is known for these problems.
This simply means that nobody fully understand the intrinsic complexity of
this type of problems for now.
It will be very interesting hence if one improves one of these algorithms or proves a nontrivial lower bound for the problem.

{
\bibliographystyle{abbrv}
\bibliography{Ann}

\begin{thebibliography}{10}

\bibitem{AHIMPR-03}
M.~Abellanas, F.Hurtado, C.Icking, L.Ma, B.Palop, and P.Ramos.
\newblock Best fitting rectangles.
\newblock In {\em Proc. Euro. Workshop Comput. Geom. (EuroCG'03)}, 2003.

\bibitem{as-eago-98}
P.~K. Agarwal and M.~Sharir.
\newblock Efficient algorithms for geometric optimization.
\newblock {\em ACM Comput. Survey}, 30(4):412--458, 1998.

\bibitem{ast-apsgo-94}
P.~K. Agarwal, M.~Sharir, and S.~Toledo.
\newblock Applications of parametric searching in geometric optimization.
\newblock {\em J. Algo.}, 17(3):292--318, 1994.

\bibitem{as-facler-87}
A.~Aggarwal and S.~Suri.
\newblock Fast algorithms for computing the largest empty rectangle.
\newblock In {\em Proc. the Third Annu. Sympos. Comput. Geom. (SoCG'87)}, pages
  278--290, 1987.

\bibitem{bae-cmwea-2018}
S.~W. Bae.
\newblock Computing a minimum-width square annulus in arbitrary orientation.
\newblock {\em Theoret. Comput. Sci.}, 718:2--13, 2018.

\bibitem{c-asdii-88}
B.~Chazelle.
\newblock An algorithm for segment-dragging and its implementation.
\newblock {\em Algorithmica}, 3(1):205--221, 1988.

\bibitem{c-welsc-96}
S.-W. Cheng.
\newblock Widest empty {L}-shaped corridor.
\newblock {\em Inform. Proc. Lett.}, 58(6):277 -- 283, 1996.

\bibitem{BCKO08}
M.~{de Berg}, O.~Cheong, M.~{van Kreveld}, and M.~Overmars.
\newblock {\em Computational Geometry: Algorithms and Applications}.
\newblock Springer-Verlag TELOS, Santa Clara, CA, USA, 3rd ed. edition, 2008.

\bibitem{dls-fwe1cc-06}
J.~D\'{\i}z-B{\'a}{\~n}ez, M.~L\'{o}pez, and J.~Sellar\`{e}s.
\newblock On finding a widest empty 1-corner corridor.
\newblock {\em Inform. Proc. Lett.}, 98(5):199 -- 205, 2006.

\bibitem{dhmrs-leap-03}
J.~M. D\'{\i}z-B{\'a}{\~n}ez, F.~Hurtado, H.~Meijer, D.~Rappaport, and J.~A.
  Sellar{\`e}s.
\newblock The largest empty annulus problem.
\newblock {\em Int. J. Comput. Geom. Appl.}, 13(4):317--325, 2003.

\bibitem{efnn-rauvd-89}
H.~Ebara, N.~Fukuyama, H.~Nakano, and Y.~Nakanishi.
\newblock Roundness algorithms using the {V}oronoi diagrams.
\newblock In {\em Abstract: 1st Canadian Conf. Comput. Geom. (CCCG'89)},
  page~41, 1989.

\bibitem{ght-oafepranw-09}
O.~N. Gluchshenko, H.~W. Hamacher, and A.~Tamir.
\newblock An optimal {$O(n\log n)$} algorithm for finding an enclosing planar
  rectilinear annulus of minimum width.
\newblock {\em Oper. Res. Lett.}, 37(3):168--170, 2009.

\bibitem{h-fuels-89}
J.~Hershberger.
\newblock Finding the upper envelope of $n$ line segments in {$O(n log n)$}
  time.
\newblock {\em Inform. Proc. Lett.}, 33(4):169 -- 174, 1989.

\bibitem{hm-fwecsp-88}
M.~Houle and A.~Maciel.
\newblock Finding the widest empty corridor through a set of points.
\newblock In G.~Toussaint, editor, {\em Snapshots of computational and discrete
  geometry}, pages 201--213. Dept. Computer Science, McGill University, 1988.

\bibitem{jp-wcp-94}
R.~Janardan and F.~P. Preparata.
\newblock Widest-corridor problems.
\newblock {\em Nordic J. Comput.}, 1:231 -- 245, 1994.

\bibitem{mahapatra-larea-2012}
P.~R.~S. Mahapatra.
\newblock Largest empty axis-parallel rectangular annulus.
\newblock {\em Journal of Emerging Trends in Computing and Information
  Sciences}, 3(6), 2012.

\bibitem{jmkd-mwra-2012}
J.~Mukherjee, P.~R.~S. Mahapatra, A.~Karmakar, and S.~Das.
\newblock Minimum-width rectangular annulus.
\newblock {\em Theoret. Comput. Sci.}, 508:74--80, 2013.

\bibitem{ps-cgi-90}
F.~P. Preparata and M.~I. Shamos.
\newblock {\em Computational Geometry: an Introduction}.
\newblock Springer, 1990.

\bibitem{rz-epccmrsare-92}
U.~Roy and X.~Zhang.
\newblock Establishment of a pair of concentric circles with the minimum radial
  separation for assessing roundness error.
\newblock {\em Computer-Aided Design}, 24(3):161--168, 1992.

\bibitem{t-cleclc-83}
G.~T. Toussaint.
\newblock Computing largest empty circles with location constraints.
\newblock {\em Int. J. Comput. Info. Sci.}, 12(5):347--358, 1983.

\bibitem{tou-sgprc-83}
G.~T. Toussaint.
\newblock Solving geometric problems with the rotating calipers.
\newblock In {\em Proc. the IEEE MELECON 83}, pages 1--4, 1983.

\bibitem{w-nmpp-86}
A.~D. Wainstein.
\newblock A non-monotonous placement problem in the plane.
\newblock In {\em Abstract: 9th All-Union Symposium USSR {S}oftware {S}ystems
  for {S}olving {O}ptimal {P}lanning {P}roblems}, pages 70--71, 1986.

\end{thebibliography}
}

\end{document}